\newtheorem{definition}{Definition}
\newtheorem{theorem}{Theorem}
\newtheorem{corollary}{Corollary}
\newtheorem{lemma}{Lemma}
\newtheorem{remark}{Remark}
\title{Edge complexity of geometric graphs on convex independent point sets}
\author{Abhijeet Khopkar}
\begin{document}

\maketitle

\begin{abstract}
In this paper, we focus on a generalized version of Gabriel graphs known as Locally Gabriel graphs
($LGGs$) and Unit distance graphs ($UDGs$) on convexly independent point sets. $UDGs$ are sub graphs of $LGGs$.
We give a simpler proof for the claim that $LGGs$ on convex independent point sets have $2n \log n + O(n)$ edges.
Then we prove that unit distance graphs on convex independent point sets have $O(n)$ edges improving the previous known
bound of $O(n \log n)$.
\end{abstract}

\section{Introduction}
Tur{\'a}n type problems have a rich history in graph theory. Tur{\'a}n's classical problem is to find the maximum number of edges $ex(n,H)$  a graph (on $n$ vertices) can have without containing
a subgraph isomorphic to $H$ (refer to Tur{\'a}n's theorem ~\cite{turan}). A simple example is that a graph not containing any cycle (acyclic graph) has linear number of edges.
These type of problems have been extensively studied for the geometric graphs~\cite{suk,toth,tth,crss,pv}. 
Various geometric graphs have been studied for special point sets like points on a uniform grid and the convex point sets. 
Tur{\'a}n type problems have also been studied on the geometric graphs when all the points are in convex position~\cite{Capoy,ram,perles,sas,fox}.
The vertices in convex position also provide a cyclic ordering on the vertices. Thus, an obvious technique to explore these problems is by extracting abstract combinatorial structures from geometric conditions
and the order on the vertices.
Similar problems have been addressed by an alternate approach of counting the maximum number of 1s in a 0-1 matrix 
where some sub matrices are forbidden. A 0-1 matrix can represent the adjacency matrix of a bipartite graph with an ordering on the vertices in both the partitions. The maximum number of 1s corresponds to the maximum number of
edges the graph can have. The problem of counting 1s in a 0-1 matrix for various forbidden sub matrices is explored extensively~\cite{furedi,H92,pet10,Keszegh09}.

For many geometric graphs, the maximum number of edges is determined by characterizing a forbidden subgraph by some geometric restriction, for examples refer to~\cite{furedi,yyy,ps04}.
In this paper, we study {\it Unit distance graphs} and {\it Locally Gabriel graphs} on convex point sets. We characterize some forbidden patterns in these graphs and use them to study the maximum number of edges.
\subsection{Unit distance graphs}
Unit distance graphs ($UDGs$)\footnote{Not to be confused with the unit disk graphs} are well studied geometric graphs. In these graphs an edge exists between two points if and only if the Euclidean distance between the points is unity. 
\begin{definition}
 A geometric graph $G=(V,E)$ is called a unit distance graph provided that for any two vertices $v_1,v_2 \in V$, the edge $(v_1,v_2) \in E$ if and only if the Euclidean
distance between $v_1$ and $v_2$ is exactly unity.
\end{definition}
UDGs have been studied extensively for various properties including their the number of edges. The upper bound and the lower bound for the
the maximum number of edges in Unit distance graphs (on $n$ points in $\mathcal{R}^2$) are $O(n^\frac{4}{3})$~\cite{bb} and $\Omega(n^\frac{1}{c\log\log n})$ (for a suitable constant $c$) respectively~\cite{Erdudg}
Erd\H{o}s showed an upper bound of $O(n^\frac{3}{2})$~\cite{Erdudg}. The bound was first improved to $o(n^\frac{3}{2})$~\cite{szem}, then improved to $n^{1.44\ldots}$~\cite{beck}.
Finally, the best known upper bound of $O(n^\frac{4}{3})$ was obtained in~\cite{bb}. Alternate proofs for the same bound were given in~\cite{Szk,pach}.
Bridging the gap in these bounds has been a long time open problem. Unit distance graphs have also been studied for various special point sets most notably the case when 
all the points are in convex position. The best known upper bound for the number of edges in a unit distance graph on a convex point set with $n$ points is $O(n \log n)$.
The first proof for this upper bound was given by Zolt{\'a}n F{\"u}redi~\cite{furedi}. The proof is motivated by characterizing a 3 $\times$ 2 sub matrix that is forbidden in a 0-1 matrix.
The sub matrix is motivated by the definition of $UDGs$ and the convexity of the point set.
It was shown that any such $a \times b$ matrix has at most $a+(a+b) \lfloor \log_{2}b \rfloor$ number of 1s. The argument can be easily extended to show that the adjacency matrix of a $UDG$ on a convex point set of size $n$ has $O(n\log n)$ number of 1s that corresponds to the total number of edges.
Peter Bra{\ss} and J{\'a}nos Pach provided a simpler alternative proof using a simple divide and conquer technique~\cite{bp}.
Another proof for the same bound using another forbidden pattern supplemented by a divide and conquer technique was given in~\cite{sas}. The best known lower bound on the number of unit distances in a convex point set is $2n-7$ for $n$ vertices~\cite{han}.
Bridging the gap in the bounds for this special case has also been an interesting open problem. Some interesting questions on the properties of unit distances in a convex point set are studied in~\cite{Erd86,Fb92}.
Unit distance graphs have also been studied for more special types of convex point sets, e.g centrally symmetric convex point set. Unit distance graphs on
centrally symmetric convex point sets have $O(n)$ edges~\cite{csc}.

\subsection{Locally Gabriel Graphs}
Gabriel and Sokal \cite{ggg} defined a Gabriel graph as follows:
\begin{definition} \label{gdef}
A geometric graph $G=(V,E)$ is called a Gabriel graph if the following condition holds:
For any $u,v \in V$, an edge $(u,v) \in E$ if and only if the disk with $\overline{uv}$ as diameter does not contain any other point of $V$.
\end{definition}

Motivated by applications in wireless routing, Kapoor and Li~\cite{yyy} proposed a relaxed version of Gabriel graphs known as $k$-locally Gabriel graphs.
These structures have been studied for the bounds on the number of edges in \cite{yyy,ps04}.
In this paper, we focus on 1-locally Gabriel graphs and call them {\em Locally Gabriel Graphs} ($LGG$s).
\begin{definition}\label{lggdef}
A geometric graph $G=(V,E)$ is called a Locally Gabriel Graph if for every $(u,v) \in E$, the disk with the line segment $\overline{uv}$ as 
diameter does not contain any neighbor of $u$ or $v$ in $G$.
\end{definition}
The above definition implies that two edges $(u,v)$ and $(u,w)$ where $u,v,w \in V$ {\it conflict} with each other if $\angle uwv~\ge~\frac{\pi}{2}$ or $\angle uvw \ge \frac{\pi}{2}$
and cannot co-exist in an $LGG$, i.e. it is not possible to satisfy the condition  $(u,v) \in E$ and $(u,w) \in E$.
Conversely if edges $(u,v)$ and $(u,w)$ co-exist in an $LGG$,
then $\angle uwv~<~\frac{\pi}{2}$ and $\angle uvw < \frac{\pi}{2}$. We call this condition as {\it LGG constraint}.
In this paper, we explore these graphs on convex independent point sets.
Let us highlight an important property to be noted for this paper.
\begin{remark}\label{uisl}
A Unit distance graph ($UDG$) is also a Locally Gabriel Graph ($LGG$).
\end{remark}
Note that if two line segments $\overline{uv}$ and $\overline{uw}$ of the same length share a common end point $u$, then $\angle uvw < \frac{\pi}{2}$
and $\angle uwv < \frac{\pi}{2}$. Thus, two edges from a vertex in $UDG$ satisfy {\it LGG constraint} leading to Remark~\ref{uisl}.
\subsection{Preliminaries and Notations}\label{pn}
A graph is called an ordered graph when the vertex set of the graph has a total order on it. We consider a bipartite graph when the vertex set in each partition has a total order on its vertices.
Formally, an ordered bipartite graph is $G = (U,V,<_U,<_V,E)$. There are two linear ordered sets $(U,<_U)$ and $(V,<_V)$ of the vertices and $E \subseteq U \times V$.
We define a special family of such bipartite graphs where some structures in these graphs are forbidden. A {\it path} in a graph represents a sequence of the edges s.t. two consecutive edges share a vertex.
A path can be represented as a set of edges.

\begin{definition}
 A path $P$ in the ordered bipartite graph  $G=(U,V,E)$ that  visits the vertices in $U$ and $V$ in the order $u_1,u_2,\ldots,u_k$ and $v_1,v_2,\ldots,v_l$ respectively,
is called a forward path if either $u_1 < u_2 \ldots <u_k$ and $v_1 < v_2 \ldots < v_l$ or $u_1 > u_2 \ldots > u_k$ and $v_1 > v_2 \ldots > v_l$.
\end{definition}
An ordered set represented as $\langle u_1,u_2 \rangle$ for $u_1, u_2 \in U$ denotes all the vertices $u_i$ s.t. $u_1 \le u_i \le u_2$.
Note that this ordered set includes $u_1$ and $u_2$ as well.
Similarly, an ordered set $\langle v_1,v_2\rangle$ for $v_1,v_2 \in V$ denotes all the vertices $v_i$ s.t. $v_1 \le v_i \le v_2$.
The {\it range} of a forward path $P$ that passes through the vertices $u_{a}, u_{b}, v_{c}$ and $v_{d}$ is denoted as $\{\langle u_{a},u_{b}\rangle,\langle v_{c},v_{d}\rangle\}$,
represents all the vertices (assume that $u_{a} < u_{b}$ and $v_{c} < v_{d}$) $u_i$ and $v_j$ s.t. $u_{a} \le u_i \le u_{b}$ and $v_{c} \le v_j \le v_{d}$.
An edge $(u_{a}, v_j) ($resp. $(v_{c},u_i))$ is called a {\it back edge} to the forward path $P$ if $v_j \in {\langle v_{c},v_{d}\rangle} ($resp. $u_i \in \langle u_{a},u_{b}\rangle)$
and $u_i > u_{a+1} ($resp. $v_j > v_{c+1})$ where $u_{a+1} \in U ($resp. $v_{c+1} \in V)$ is a non terminal vertex in $P$, i.e. this vertex has edges incident to two vertices in $P$.
Refer to Figure~\ref{fig1} for a pictorial representation.
\begin{figure}[ht]
\begin{center}
 \includegraphics[scale=0.4]{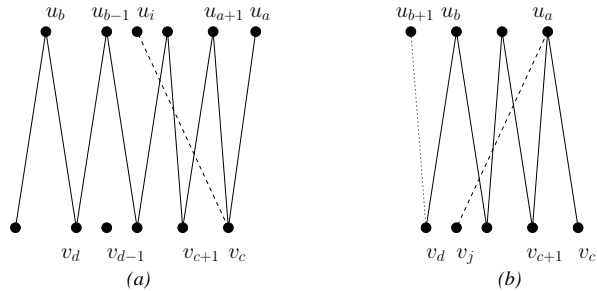}
 \caption{A path and back edges in a $PRBG$}
\label{fig1}
\end{center}
\end{figure}
\begin{definition}\label{prp}
 An ordered bipartite graph $G=(U,V,E)$ is said to satisfy the path restricted property if for any forward path $P$ in $G$, there exists no back edge $e \in E$ to $P$.
\end{definition}
A path-restricted ordered bipartite graph ($PRBG$) is an ordered bipartite graph that satisfies the path restricted property.
Note that a $PRBG$ follows the constraint presented by F{\"u}redi~\cite{furedi}, where it was proved that any bipartite graph
following this constraint has $O(n \log n)$ edges. It also implies that a $PRBG$ on $n$ vertices has $O(n \log n)$ edges.

To represent these graphs pictorially, for convenience the vertices are placed from right to left in the increasing order.
Note that subscripts to the vertices are also mostly increasing from right to left corresponding to the increasing order.
However, at some places for better readability of the arguments subscripts are not necessarily increasing from right to left that
should not be confused with the order of vertices. In a pictorial representation, order is always increasing from right to left.
\subsection{Our Contributions}
We establish a relationship between UDGs/LGGS on convex point sets and the path restricted ordered bipartite graphs.
The following are the main results presented in this paper.
\begin{itemize}
 \item We prove that $UDGs$ on convex point sets have $O(n)$ edges.
 \item We give an alternate and simpler proof (compared to the F{\"u}redi's proof~\cite{furedi}) that a path restricted ordered bipartite graph on $n$ points has at most $n \log n + O(n)$ edges.
It also proves that $LGGs$ on convex point sets have at most $2n \log n + O(n)$ edges.
\end{itemize}
\section{Obtaining $PRBGs$ from $UDGs/LGGs$}\label{sc2}
In this section, we show that a $UDG/LGG$ on a convex point set can be decomposed into two $PRBGs$ by removing at most linear number of edges.
First, we focus on some fundamental properties of the unit distance graphs on a convex point set. Two points $p_i$ and $p_j$ in a convex point set $P$ are called antipodal points
if there exist two parallel lines $\ell_i$ passing through $p_i$ and $\ell_j$ through $p_j$, such that all other points in $P$ are contained between $\ell_i$ and $\ell_j$.
\begin{lemma}\label{ap}
\cite{bp}Let $G = (P,E)$ be a unit distance graph on convex point set $P$. If $p_i \in P$ and $p_j \in P$ are two antipodal points, 
then all but at most $2|P|$ edges of $G$ cross the line $\overline{p_ip_j}$. 
\end{lemma}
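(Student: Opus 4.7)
The plan is to split the non-crossing edges along the line $\overline{p_ip_j}$ into the two convex chains that lie on either side of it, and then bound the unit-distance edges on each chain by a per-vertex argument based on a unimodality property.

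Since $p_i$ and $p_j$ are antipodal, the line $\overline{p_ip_j}$ separates $P \setminus \{p_i,p_j\}$ into an upper set $U$ and a lower set $L$. Each set, together with $\{p_i,p_j\}$, enumerates in its order along $\partial\mathrm{conv}(P)$ a convex chain $C_U$ or $C_L$ of sizes $n_U = |U|+2$ and $n_L = |L|+2$, so that $n_U + n_L = |P|+2$. Joining either chain to the segment $\overline{p_ip_j}$ produces a convex polygon (call them $P_U$ and $P_L$). An edge that does not cross $\overline{p_ip_j}$ has both endpoints on the same chain, so it suffices to bound the unit-distance edges within each chain.

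The central ingredient is the classical fact that on a convex polygon $Q$, the Euclidean distance from a fixed vertex $v$ of $Q$ to a moving point on $\partial Q$ is unimodal in arclength: starting from $v$, it strictly increases to a unique maximum (attained at the farthest vertex) and then strictly decreases back to $0$ at $v$. Applying this to $P_U$ and a vertex $u_m$ of the upper chain, the distances from $u_m$ to the other chain vertices on each of the two sides (going toward $p_i$ or toward $p_j$ along $C_U$) form a unimodal sequence, so the value $1$ is attained at most twice on each side. Hence each interior vertex of $C_U$ has at most $4$ upper-chain unit-distance neighbors, while each endpoint $p_i, p_j$ (which has only one side along $C_U$) has at most $2$. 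Summing these per-vertex counts and dividing by $2$ for double counting gives at most $2 n_U - 2$ unit-distance edges on the upper chain, and symmetrically at most $2 n_L - 2$ on the lower chain. Adding and using $n_U + n_L = |P|+2$ produces the claimed bound of at most $2|P|$ non-crossing edges.

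The main obstacle is justifying the unimodality of boundary distance on the convex polygon $P_U$. I would prove it by observing that as one traverses $\partial P_U$ once starting at $u_m$, the tangent direction $\gamma'(s)$ rotates monotonically through $2\pi$, while the ray from $u_m$ to the moving point $\gamma(s)$ rotates monotonically through only $\pi$ in total (by convexity this ray is in bijection with $\gamma(s)\in \partial P_U \setminus \{u_m\}$, so it cannot double back). Comparing these two monotone rotations, the angle between $\gamma'(s)$ and $\gamma(s)-u_m$ increases monotonically from $0$ to $\pi$, so the sign of $(\gamma(s)-u_m)\cdot\gamma'(s)$, which governs the derivative of $|u_m-\gamma(s)|^2$, changes from positive to negative exactly once. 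That is precisely the unimodality statement; once it is in hand, the remaining counting is routine.
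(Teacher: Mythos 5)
The paper does not actually prove this lemma (it is imported verbatim from \cite{bp}), so your attempt must stand on its own, and it does not: the ``classical fact'' at its core --- that the distance from a fixed vertex of a convex polygon to a point traversing the boundary is unimodal --- is false. Your justification fails at the step where you assert that the angle between $\gamma'(s)$ and $\gamma(s)-u_m$ grows monotonically because the tangent direction and the radial direction each rotate monotonically; the difference of two monotone quantities need not be monotone. In fact, along each edge the tangent direction is constant while the radial direction rotates \emph{toward} it, so that angle strictly decreases along every edge and jumps up only at vertices: it is a sawtooth, and the sign of $(\gamma(s)-u_m)\cdot\gamma'(s)$ can flip many times (the squared distance restricted to a single edge is a convex quadratic, so it can already have an interior minimum on one edge). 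Concretely, the points $v=(0,0)$, $(9,0.8)$, $(10,1)$, $(8.45,4.225)$, $(7.5,6)$ form a convex pentagon in this cyclic order, and the distances from $v$ to the other four vertices are approximately $9.04,\ 10.05,\ 9.45,\ 9.60$ --- up, down, up. After rescaling, one can place three consecutive chain vertices on one side of $v$ all at distance exactly $1$ from $v$, so ``the value $1$ is attained at most twice on each side'' is simply wrong. A cleaner sanity check: if your unimodality statement held, applying it to the boundary of $\mathrm{conv}(P)$ itself would give every vertex of a convex-position $UDG$ at most two unit neighbours, hence at most $n$ edges in total, contradicting the $2n-7$ lower bound of \cite{han} quoted in this very paper.

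The outer scaffolding of your argument (splitting the non-crossing edges between the two arcs determined by $\overline{p_ip_j}$, then summing per-arc bounds) is fine, and your arithmetic $2(n_U+n_L)-4=2|P|$ checks out, but the load-bearing step --- a per-vertex bound of $4$ on the same-arc unit degree --- does not follow from anything valid and need not hold: a unit circle centred at a chain vertex can cross a convex arc more than four times. The actual argument of Bra{\ss} and Pach uses the antipodal pair more substantially: the two parallel supporting lines at $p_i$ and $p_j$ force each of the two arcs to be monotone in the direction of those lines, and the $2|P|$ bound on same-arc unit pairs is then obtained by a global charging over the arc rather than by bounding each individual vertex's same-arc degree. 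The unimodality route cannot be repaired, so the proof as written has a genuine gap.
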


Let $p_1$ and $p_2$ be two antipodal points in the given convex point set $P$ as shown in Figure~\ref{ufig1}. Let
$U$ and $V$ be the sets of the points on the opposite sides of the line $\overline{p_1p_2}$.
Let the vertices in $U$ and $V$ be $u_1,u_2,\ldots u_{n_1}$ and $v_1,v_2,\ldots v_{n_2}$ respectively (from right to left).
Remove all the edges that do not cross the line $\overline{p_ip_j}$. Let $E'$ be the set of the remaining edges.
Consider the bipartite graph $G = (U,V,E')$. $E'$ is divided into two disjoint sets $E_1$ and $E_2$ by the following rule.
Consider an edge $(u_i,v_j)$, let $v_{j-1}$ and $v_{j+1}$ be the adjacent vertices to $v_j$ in $V$ on left and right side respectively as shown in Figure~\ref{ufig2}.
By convexity, it can be observed that at least one of $\angle u_iv_jv_{j+1}$ or $\angle uv_jv_{j-1}$ is acute.
If $\angle u_iv_jv_{j+1}$ is acute then put the edge $(u_i,v_j)$ in $E_1$ else if $\angle u_iv_jv_{j-1}$ is acute then put the edge $(u_i,v_j)$ in $E_2$.
If both the angles are acute, then the edge can be put arbitrarily in either $E_1$ or $E_2$.
In the graph $G_1 = (U,V,E_1)$, the vertices are ordered as $u_1 < u_2 < \ldots u_{n_1}$ in $U$ and
$v_1 < v_2 < \ldots v_{n_2}$ in $V$. The ordering is reversed in the graph $G_2 = (U,V, E_2)$.
\begin{remark}
 In $G_1$ and $G_2$, no two edges intersect in a forward path.
\end{remark}
\begin{figure}[ht]
\begin{minipage}[b]{0.5\linewidth}
\centering
\includegraphics[scale=0.5]{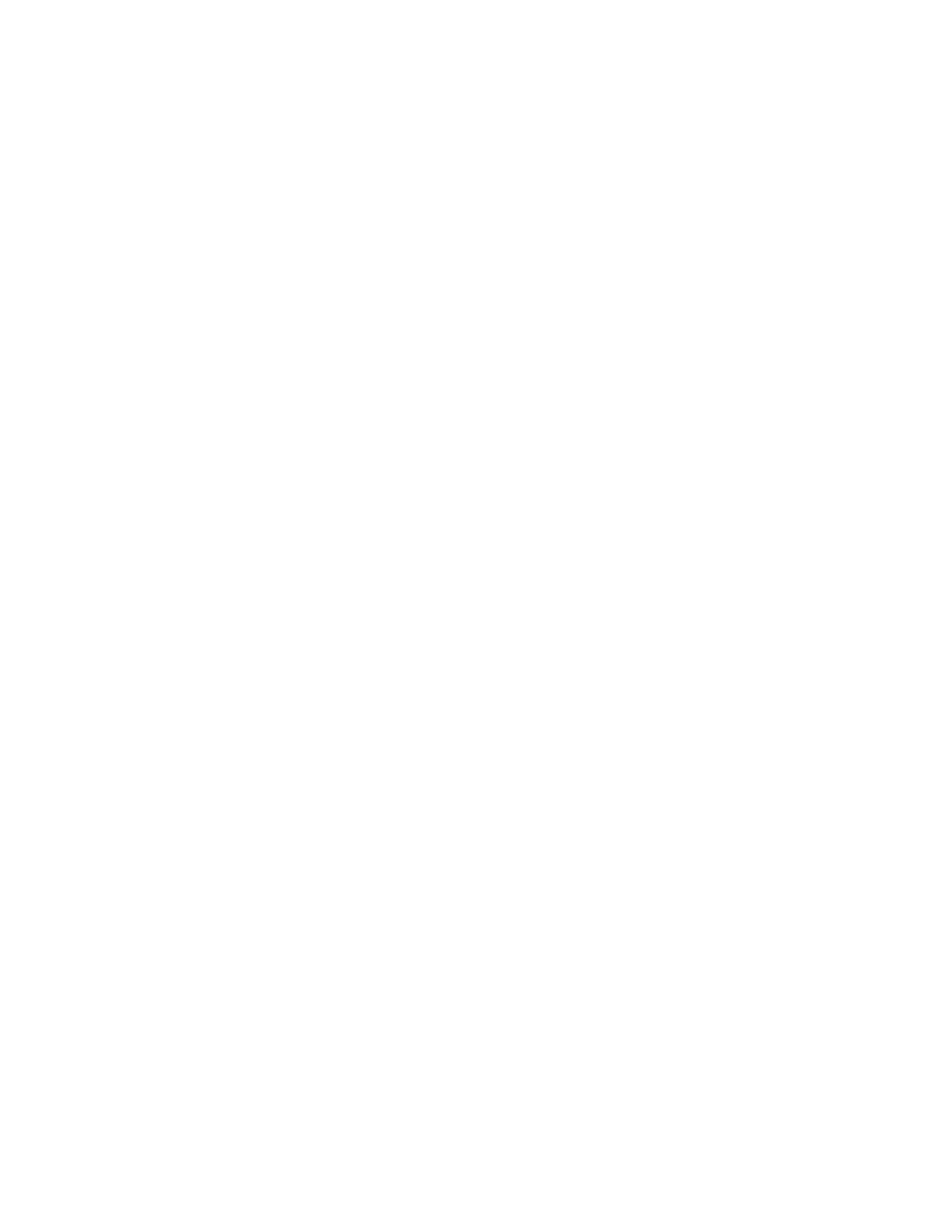}
\caption{Antipodal points in a convex point set}
\label{ufig1}
\end{minipage}
\hspace{0.5cm}
\begin{minipage}[b]{0.4\linewidth}
\centering
\includegraphics[scale=0.55]{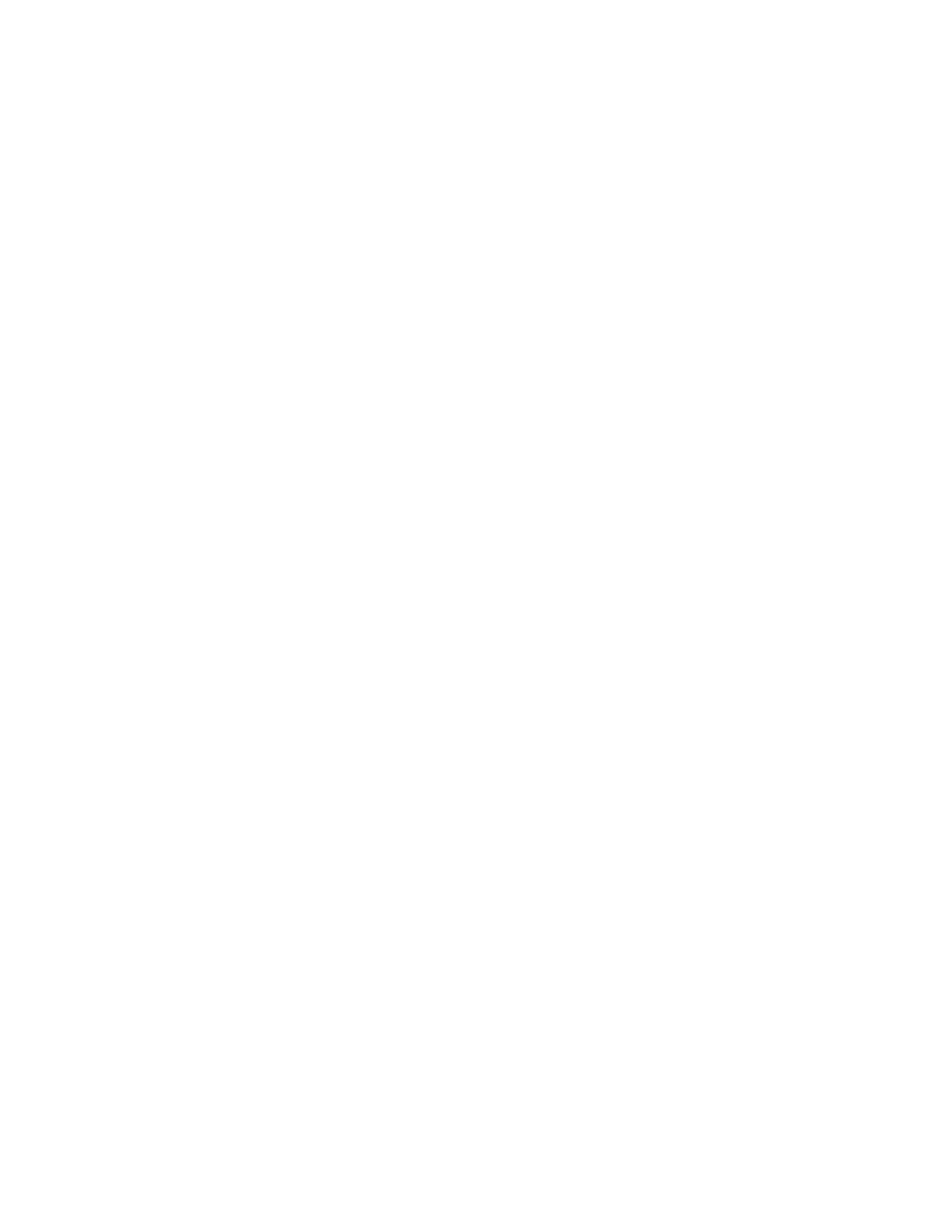}
\caption{Partition of the edges}
\label{ufig2}
\end{minipage}
\end{figure}
Let us remove the extreme left edge incident to every vertex $v \in V$ from $G_1$ and call this process {\it left trimming}.
Formally, if $v$ has edges incident to $u_1,u_2,\ldots,u_k$ such that $u_k > \ldots > u_1$, then the edge $(v,u_k)$ is removed from$G_1$,
the resultant graph is called $G'_1$. Similarly, by {\it right trimming} 
the extreme right edge for every vertex $v \in V$ in $G_2$ is removed to obtain the graph $G'_2$. Let $G_{UDG}$ denote the class of the ordered bipartite graphs,
consisting of the graphs $G'_1$ and $G'_2$ that are obtained from the unit distance graphs. It can be assumed w.l.o.g. that $|V| \le |U|$.

Consider the Locally Gabriel graphs on a convex point set. It was observed in ~\cite{gclgg} that Lemma~\ref{ap} holds true for Locally Gabriel graphs too.
Therefore, a bipartition can be obtained similarly by dividing a convex point set along two antipodal points. Consider the bipartite graph between the two partitions.
Similar to $G_{UDG}$, a new graph class $G_{LGG}$ can be defined. The procedure to obtain a graph in $G_{UDG}$ (from the $UDG$ on a convex point set) can also be applied to an $LGG$ on a convex point set to obtain a graph in $G_{LGG}$.

We show that the graphs in $G_{UDG}$ and $G_{LGG}$ are path-restricted ordered bipartite graphs in Lemma~\ref{lggpbg}. Thus, a $UDG/LGG$ on convex a point set can be decomposed into two $PRBGs$ by removing at most $3n$ edges where $n = n_1 + n_2$
($2n$ edges from Lemma~\ref{ap} and $n$ edges by deleting extreme edges as mentioned above).

\begin{lemma}\label{lggpbg}
 Any graph $G = (U,V,E)$ in $G_{LGG}$ satisfies the path restricted property. Therefore, $G$ is a $PRBG$.
\end{lemma}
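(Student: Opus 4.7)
I argue by contradiction. Let $G \in G_{LGG}$ be of the form $G'_1$ (the case $G'_2$ is symmetric after reversing both orderings). Suppose a forward path $P$ in $G$ carries a back edge $e$; without loss of generality $e = (u_a, v_j)$ issues from the $U$-endpoint $u_a$ of $P$. Let $v_c$ be the $V$-neighbor of $u_a$ along $P$ and $u_{c'}$ the second $U$-vertex of $P$. By the forward property $u_a < u_{c'}$ and $v_c < v_j$, so $u_a, u_{c'}, v_j, v_c$ appear in this cyclic order on the convex hull of the original point set.

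The central geometric step I plan to establish first is a sublemma: in an LGG on a convex point set bipartitioned by an antipodal line into $U$ and $V$, no four points $a < a' \in U$, $b < b' \in V$ can simultaneously support all four bipartite edges $(a,b), (a,b'), (a',b), (a',b')$. These four points form a convex quadrilateral in cyclic hull order $a, a', b', b$; applying the LGG disk constraint to each of the four shared-endpoint edge pairs (e.g.\ the pair $(a,b), (a,b')$ sharing $a$ forces $\angle a\,b\,b' < \pi/2$, which is precisely the interior angle at $b$) forces all four interior angles of the quadrilateral to be strictly less than $\pi/2$, contradicting their sum of $2\pi$. This step uses only the LGG constraint together with convex position, and is independent of the $E_1$ partition rule and the left-trimming.

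I then extract such a forbidden $K_{2,2}$ in $G_1$ from the back-edge data via the left-trimming. From $(u_{c'}, v_c) \in G'_1$ I obtain some $u^* > u_{c'}$ with $(u^*, v_c) \in G_1$, and from $(u_a, v_j) \in G'_1$ I obtain some $u^{**} > u_a$ with $(u^{**}, v_j) \in G_1$. If $u^* = u^{**}$, or if $(u^*, v_j) \in G_1$, or if $(u^{**}, v_c) \in G_1$, then an appropriate choice of $u^\circ \in \{u^*, u^{**}\}$ puts all four edges $(u_a, v_c), (u_a, v_j), (u^\circ, v_c), (u^\circ, v_j)$ in $G_1$, and the sublemma yields the contradiction. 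The residual subcase --- in which neither crossed-over candidate edge lies in $G_1$ --- is to be closed by invoking the defining condition of $E_1$ (for every $(u_i, v_j) \in E_1$, the angle $\angle u_i v_j v_{j+1}$ is acute) together with the monotonic decrease of this angle as $u_i$ slides leftward along the $U$-side of the hull: this transfers the $E_1$-angle acuteness from $(u_a, v_c)$ to $(u^{**}, v_c)$ and from $(u_a, v_j)$ to $(u^*, v_j)$, forcing any would-be-missing edge that is an ambient LGG edge to actually belong to $G_1$; the LGG disk feasibility of the remaining candidate is then verified via the already-used LGG inequalities at $v_c, v_j$, and $u_a$.

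The main obstacle is this residual subcase, which rests on the delicate angle-monotonicity bookkeeping that ties the $E_1$-partition rule to the LGG disk feasibility of the candidate crossed-over edges. Ruling out the possibility that a candidate edge is simply absent from the ambient LGG (while neither of the trimming witnesses $u^*, u^{**}$ supplies the needed second endpoint) is the technical heart of the argument. Once this step is discharged, the $K_{2,2}$ sublemma closes the proof.
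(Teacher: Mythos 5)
Your $K_{2,2}$ sublemma is correct and cleanly argued: with $a<a'$ in $U$ and $b<b'$ in $V$ on opposite sides of the antipodal line, the cyclic hull order is $a,a',b',b$, and the four disk conditions of Definition~\ref{lggdef} make all four interior angles of that convex quadrilateral acute, which contradicts their sum being $2\pi$. The gap is in the reduction. A forward path with a back edge does not contain, and cannot in general be completed to, a $K_{2,2}$: your plan hinges on producing a single vertex $u^\circ$ adjacent to both $v_c$ and $v_j$, but the trimming witnesses $u^*$ and $u^{**}$ are witnesses attached to different $V$-vertices, and nothing makes them coincide or makes the crossed-over edges $(u^*,v_j)$, $(u^{**},v_c)$ exist. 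Crucially, Definition~\ref{lggdef} is a local constraint that the edges of $E$ must satisfy, not a maximality condition; no edge of an $LGG$ is ever \emph{forced} to exist, so the residual subcase you yourself flag as the technical heart cannot be discharged by showing a candidate edge is ``LGG-feasible.'' There is also a structural reason the route cannot work as stated: the path restricted property is strictly stronger than $K_{2,2}$-freeness (which by itself only yields $O(n^{3/2})$ edges, not $O(n\log n)$), so no argument that funnels everything through the $K_{2,2}$ sublemma alone can establish it; geometric information carried by the whole path must enter.

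The paper's proof avoids the issue by working with the quadrilateral $u_av_cv_du_b$ spanned by the \emph{range} of the path, all four of whose corners are joined by edges already known to be present. The $E_1$ partition rule supplies an acute angle at $v_d$, and the LGG constraint applied to consecutive path edges supplies acute angles at $u_b$ and at $v_c$; hence the fourth interior angle $\angle u_bu_av_c$ is obtuse, and by convexity the hypothetical back edge then conflicts with the single existing path edge at $v_c$ --- a two-edge conflict, which is exactly what Definition~\ref{lggdef} forbids. The other type of back edge is handled by a separate argument using the edge removed by left trimming. Your ``without loss of generality'' on the direction of the back edge is also unjustified: both the $E_1$ rule and the trimming are defined asymmetrically with respect to $V$, so the two cases genuinely require different treatments, as in the paper.
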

\begin{proof}
Recall that two graphs in class $G_{LGG}$ ($G'_1$ and $G'_2$) are obtained from an $LGG$ on a convex point set. We prove
the Lemma for $G'_1$. A symmetrically opposite argument can be given to prove the stated Lemma for $G'_2$ type of graphs.
We show that if $P$ is a forward path in $G = (U,V,E)$ with the range $R_P=~\{\langle u_a,u_b\rangle,\langle v_c,v_d\rangle\}$,
then there does not exist a back edge $(u_i,v_c) \in E$ where $u_i \in \langle u_a,u_b\rangle$.
The path $P$ and the concerned vertices along with the edges are shown in Figure~\ref{f3}($a$).
Let $v_{d-1} \in V$ be the vertex preceding $v_d$ in $V$. Note that $(u_b,v_d)$ is an edge in $P$.
Now $\angle u_bv_dv_{d-1} < \frac{\pi}{2}$ (by the definition of $G_{LGG}$). By convexity, it can be further inferred that
$\angle u_bv_dv_c < \frac{\pi}{2}$. Let $u_{b-1} \in U$ be the vertex in $P$ with an edge incident to $v_d$ (apart from 
$u_{b}$) and $v_{c+1} \in V$ be the vertex that immediately succeeds to $v_c$ in $P$. By the definition~\ref{lggdef} of
$LGGs$, $\angle v_du_bu_{b-1}, \angle u_av_cv_{c+1} < \frac{\pi}{2}$. By convexity, $\angle v_du_bu_a, \angle u_av_cv_d <
\frac{\pi}{2}$.Thus, in the quadrilateral $u_av_cv_du_b$, $\angle u_bu_av_c$ must be greater than $\frac{\pi}{2}$. By convexity, $\angle u_iu_av_c > \frac{\pi}{2}$.
Thus, the edge $(u_i,v_c)$ and $(u_a,v_c)$ conflict with each other. Therefore, the edges $(u_i,v_c)$ cannot exist in $G$
for any $u_i \in \langle u_a,u_b\rangle$.
\begin{figure}[ht]
\begin{center}
 \includegraphics[scale=0.4]{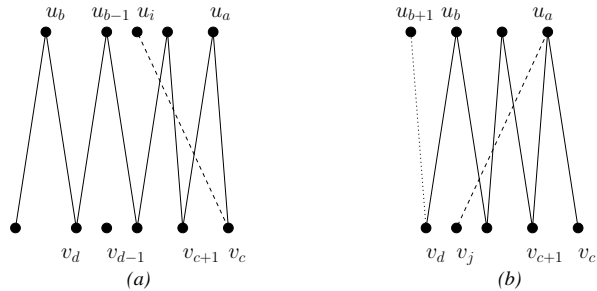}
 \caption{$G_{LGG}$ has path restricted properties}
\label{f3}
\end{center}
\end{figure}

Recall that the leftmost edge incident to every vertex $v \in V$ is deleted in the graph $G_1=(U,V,E_1)$ to obtain a $G_{LGG}$ ({\it left trimming} in Section~\ref{sc2}).
Let $(u_{b+1}$ be the leftmost vertex with an edge incident to $v_d$. Note that the edge $(u_{b+1},v_d)$ is present in $G_1$ but not
present in $G'_1$. Now similar arguments lead to prove that a back edge is forbidden. If there exists a back edge $(u_a,v_j) \in E$ where $v_j \in \langle v_c,v_d\rangle$,
then in the convex quadrilateral $u_av_{c+1}u_dv_{b+1}$, all the internal angles are acute.
Therefore, if $P$ is a forward path in $G_{LGG} = (U,V,E)$ with the range $R_P = \{\langle u_a,u_b\rangle,\langle v_c,v_d\rangle\}$, then there does not exist a
back edge $(u_a,v_j) \in E$ where $v_j \in \langle v_c,v_d\rangle$ (refer to Figure~\ref{f3}($b$)).

Thus, any graph in $G_{LGG}$ satisfies the path restricted property. Therefore, a graph in the class $G_{LGG}$ is also a $PRBG$.
\end{proof}

By Remark~\ref{uisl} a Unit distance graph is also a Locally Gabriel graph. Therefore, any graph in the class $G_{UDG}$ also
belongs to the class $G_{LGG}$.
\begin{lemma}
 Any graph $G = (U,V,E)$ in $G_{UDG}$ satisfies the path restricted property. Therefore, $G$ is a $PRBG$.
\end{lemma}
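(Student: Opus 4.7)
The plan is to derive this statement as an immediate corollary of Lemma~\ref{lggpbg} together with Remark~\ref{uisl}, rather than redoing the angle-chasing argument from scratch. The key point is that a unit distance graph is already a locally Gabriel graph, and the extraction procedure that produces a graph in $G_{UDG}$ is syntactically the same procedure that produces a graph in $G_{LGG}$.

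Concretely, I would first verify the class inclusion $G_{UDG} \subseteq G_{LGG}$. The construction of a graph in $G_{UDG}$ from a UDG $H$ proceeds by: choosing an antipodal pair $p_1,p_2$, discarding the edges that do not cross $\overline{p_1p_2}$, splitting the crossing edges into $E_1$ and $E_2$ based on which of $\angle u_i v_j v_{j+1}$, $\angle u_i v_j v_{j-1}$ is acute, and then left-trimming (resp.\ right-trimming) the leftmost (resp.\ rightmost) edge at each $v \in V$. This procedure depends only on the geometric positions of the points and on the edge set of $H$; it does not use the unit-distance condition anywhere. Hence, viewing $H$ as an LGG (as justified by Remark~\ref{uisl}) and running the $G_{LGG}$ extraction produces the same ordered bipartite graph. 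Therefore every graph in $G_{UDG}$ is also in $G_{LGG}$.

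Once the class containment is established, I simply invoke Lemma~\ref{lggpbg}: every graph in $G_{LGG}$ satisfies the path-restricted property and is thus a PRBG. The same conclusion applies to any $G \in G_{UDG}$.

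There is essentially no technical obstacle here; the only thing to be careful about is observing that the trimming and edge-bipartition rules are defined purely in terms of the angular geometry at the vertices of the convex hull, so they yield identical output graphs whether we regard the underlying graph as a UDG or (via Remark~\ref{uisl}) as an LGG. No additional geometric lemma is required beyond what is already used in the proof of Lemma~\ref{lggpbg}.
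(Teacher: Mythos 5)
Your proposal matches the paper's own justification exactly: the paper precedes this lemma with the observation that, by Remark~\ref{uisl}, every unit distance graph is a locally Gabriel graph, so every graph in $G_{UDG}$ belongs to $G_{LGG}$, and the conclusion then follows from Lemma~\ref{lggpbg}. Your additional remark that the extraction procedure depends only on the point positions and the edge set (not on the unit-distance condition) is a correct and slightly more careful articulation of the same step.
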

\section{Properties of the path restricted ordered bipartite graphs}\label{sc3}
First let us describe some properties of $PRBGs$ arising from the {\it path restricted} property. A forward path with
range $\{\langle u_a,u_b\rangle,\langle v_c,v_d\rangle\}$ can be said to emerge or start from from any of the vertices $u_a, u_b, v_c$ or $v_d$.
\begin{corollary} \label{rem1}
 In a path restricted ordered bipartite graph, two forward paths emerging from a vertex in the same direction never meet each other.
\end{corollary}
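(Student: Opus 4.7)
The plan is to argue by contradiction using the path-restricted property applied to an auxiliary forward path obtained by splicing. Suppose two distinct forward paths $P_1, P_2$ emerge from a vertex $w$ in the same direction and meet again at some vertex $x \ne w$; by passing to sub-paths if necessary, take $P_1$ and $P_2$ to share only $w$ and $x$. Without loss of generality $w = u_\alpha \in U$ and both paths are forward increasing, with first edges $(u_\alpha, v_p) \in P_1$ and $(u_\alpha, v_q) \in P_2$ where $v_p \ne v_q$; after relabeling, $v_p < v_q$. Let $x_1$ and $x_1'$ denote the second $U$-vertices of $P_1$ and $P_2$ respectively; each path must have length at least $2$ to reach $x$ while respecting monotonicity, so both exist.

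The central construction is the auxiliary path
\[
P^* \;=\; v_p,\; u_\alpha,\; v_q,\; x_1',\; \ldots,\; x,
\]
obtained by prepending $v_p$ in front of the entire path $P_2$. One verifies at once that $P^*$ is forward increasing: its $V$-indices begin $p < q$ and then continue with the monotone $V$-sequence of $P_2$, and its $U$-indices are exactly those of $P_2$. In particular $v_p$ is the minimum $V$-vertex in $P^*$'s range, and $x_1'$ is the second $U$-vertex of $P^*$. The edge $(v_p, x_1) \in E$, which is the second edge of $P_1$, has the shape $(v_c, u_i)$ required of a back edge with $v_c = v_p$.

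Now compare $x_1$ with $x_1'$. Equality $x_1 = x_1'$ would give a shared intermediate vertex of $P_1$ and $P_2$, contradicting the choice of $x$. If $x_1 > x_1'$, then $x_1$ strictly exceeds the second $U$-vertex of $P^*$ while still lying in its $U$-range, so $(v_p, x_1)$ is a back edge of $P^*$, violating the path-restricted property. If $x_1 < x_1'$, the same edge becomes a back edge to the shorter forward path $P^\dagger = v_p, u_\alpha, v_q, x_1'$, because $x_1$ now sits strictly between $u_\alpha$ and $x_1'$ in $P^\dagger$'s range. Each case contradicts the path-restricted property, so no such pair $P_1, P_2$ can exist; the case $x \in V$ is treated symmetrically by swapping the roles of $U$ and $V$ throughout. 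I expect the main delicate point to be checking the back-edge condition uniformly across all sub-cases, particularly the short-path degeneracies where $x_1$ or $x_1'$ coincides with $x$ itself, but these reduce to the same inequality comparison and do not require separate treatment.
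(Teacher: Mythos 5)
The paper itself offers no proof of this corollary---it is asserted as an immediate consequence of the path-restricted property---so you are supplying an argument where the author gives none, and your splicing idea (prepend $v_p$ to $P_2$ and exhibit the second edge of $P_1$ as a back edge) is the right kind of move. However, there is a genuine gap in your central case $x_1 > x_1'$. You assert that $x_1$ ``still lies in the $U$-range'' of $P^* = v_p, u_\alpha, v_q, x_1', \ldots, x$, but that range is $\langle u_\alpha, u_{\max}(P_2)\rangle$, and nothing forces $x_1 \le u_{\max}(P_2)$ when the meeting vertex $x$ lies in $V$. Concretely, take $u_0<u_1<u_2$, $v_0<v_1<v_2$, $P_1 = u_0, v_0, u_2, v_2$ and $P_2 = u_0, v_1, u_1, v_2$, meeting at $x=v_2\in V$: here $P^* = v_0,u_0,v_1,u_1,v_2$ has $U$-range $\langle u_0,u_1\rangle$ and $x_1=u_2$ lies outside it, so $(v_0,u_2)$ is not a back edge to $P^*$ and your contradiction evaporates. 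The repair is to continue the auxiliary path \emph{through} the meeting vertex, i.e.\ use $v_0,u_0,v_1,u_1,v_2,u_2$ (append the last edge of $P_1$, reversed), check that this is still a forward path (which needs $u_{\max}(P_1) > u_{\max}(P_2)$, exactly the situation in which the shorter $P^*$ fails), and only then invoke the back-edge prohibition. You never perform or justify this extension.

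A secondary problem is your case $x_1 < x_1'$: the paper's definition of a back edge $(v_c,u_i)$ requires not only $u_i \in \langle u_a,u_b\rangle$ but also $u_i > u_{a+1}$, where $u_{a+1}$ is the second $U$-vertex of the path. For $P^\dagger = v_p,u_\alpha,v_q,x_1'$ that second $U$-vertex is $x_1'$ itself, and $x_1 < x_1'$, so ``sits strictly between $u_\alpha$ and $x_1'$ in the range'' does not make $(v_p,x_1)$ a back edge under the stated definition. (The paper arguably applies its own definition loosely elsewhere, e.g.\ in the disjoint-ranges lemma, but if you rely on the literal definition in one case you cannot silently relax it in another.) Finally, your claim that both $x_1$ and $x_1'$ must exist because ``each path must have length at least $2$'' fails when $x = v_q$, i.e.\ when $P_2$ is the single edge $(u_\alpha,v_q)$; that degenerate case needs its own (short) argument rather than the dismissal you give it.
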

\begin{corollary} \label{rem2}
From any vertex (not in the forward path $P$), only one edge can be incident to the vertices in the forward path $P$.
\end{corollary}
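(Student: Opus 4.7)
The plan is to derive a contradiction from Corollary~\ref{rem1}. Suppose, for the sake of contradiction, that some vertex $w \notin P$ is adjacent to two distinct vertices $x_1, x_2$ of $P$. Since the graph is bipartite, both $x_1$ and $x_2$ lie in the partition opposite to $w$; relabel them if necessary so that $x_1 < x_2$ in the induced order on that partition.

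I would then exhibit two distinct forward paths from $x_1$ to $x_2$ and invoke Corollary~\ref{rem1}. The first, $P_1$, is the subpath of $P$ running from $x_1$ to $x_2$; since $P$ is itself forward and $x_1 < x_2$, this subpath inherits monotone increasing orderings on both partitions and is therefore forward. The second, $P_2 = x_1\,w\,x_2$, visits $w$ as its only vertex in $w$'s partition (so that partition's monotonicity condition is vacuous) and visits the other partition at $x_1$ followed by $x_2$ with $x_1 < x_2$, so it too qualifies as a forward path. Because $w \notin P$, the paths $P_1$ and $P_2$ are genuinely distinct. Both emerge from $x_1$ in the monotone increasing direction and share the vertex $x_2$, which is precisely the configuration forbidden by Corollary~\ref{rem1}. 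This contradiction rules out the existence of $w$, and the statement follows.

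The main subtlety I anticipate is matching the constructed paths to the precise hypothesis of Corollary~\ref{rem1}: one must confirm that $P_2$ counts as a forward path emerging in the ``same direction'' as $P_1$. This reduces to observing that $P_2$'s only vertex on $w$'s side imposes no constraint at all on monotonicity there, while $x_1 < x_2$ forces monotone increasing behavior on the other side, matching the direction of $P_1$. Once that check is made, no further computation is needed; the desired contradiction is immediate.
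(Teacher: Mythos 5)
Your argument is correct and is essentially the intended one: the paper states this corollary with no proof at all, treating it as an immediate consequence of the path-restricted property, and your reduction to Corollary~\ref{rem1} --- taking $P_1$ to be the subpath of $P$ between the two putative neighbours $x_1<x_2$ of $w$ and $P_2=x_1\,w\,x_2$, both forward paths leaving $x_1$ toward higher order (the monotonicity condition on $w$'s side being vacuous) and meeting at $x_2$ --- is the natural way to make that implication explicit. The only caveat is that your proof is exactly as strong as Corollary~\ref{rem1}, which the paper likewise leaves unproved; granting it, your derivation is complete.
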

Let us consider all the forward paths starting from a vertex. These paths could be classified into two sets. The first set consists of all the forward paths visiting to the lower ordered vertices (rightwards) and 
the second set consists of all the forward paths visiting to the higher ordered vertices (leftwards).
Let us consider first the set of the paths visiting rightwards. 
From the subsequent vertices on these paths, multiple paths can emerge visiting the vertices rightwards. These paths never meet with each other (refer to Corollary~\ref{rem1}).
Thus, these forward paths emerging from a vertex form a tree. Let $T_r(u)$ denotes such a tree originating from $u$.
Similarly, $T_l(u)$ denotes a tree that consists of all the forward paths originating from $u$ visiting the higher ordered vertices (leftwards).
\begin{lemma}
For any vertex $v$ in a $PRBG$ $G=(U,V,E)$, the subgraph induced by the vertices of $T_r(v)$ has $n-1$ edges where $n$ is the number of vertices spanned by $T_r(v)$.
\end{lemma}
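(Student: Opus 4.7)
The plan is to show that the subgraph of $G$ induced on the vertex set of $T_r(v)$ coincides with the tree $T_r(v)$ itself; since $T_r(v)$ is a tree on $n$ vertices it then carries exactly $n-1$ edges, which yields the claim. I will argue by contradiction: suppose some edge $e=(x,y)\in E$ has both endpoints in $T_r(v)$ but is not a tree edge, and let $P_x,P_y$ denote the unique tree paths from $v$ to $x$ and $y$ respectively, both rightward forward paths emerging from $v$.

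Consider first the case where one of $x,y$ lies on the tree path to the other, say $x\in P_y$ so that $P_x$ is a prefix of $P_y$. Since $y$ is the terminus of the rightward forward path $P_y$, its index in its own partition is the smallest among same-partition vertices of $P_y$, hence strictly smaller than every same-partition index occurring on the prefix $P_x$. Appending $e$ to $P_x$ therefore preserves monotonicity in both partitions and yields a rightward forward path from $v$ to $y$ that is distinct from $P_y$ (it uses $e$, while $P_y$ does not). Two distinct rightward forward paths from $v$ re-meeting at $y$ contradict Corollary~\ref{rem1}.

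In the remaining case, $x$ and $y$ lie in different subtrees of $T_r(v)$ below a common ancestor $w\ne x,y$. I would attempt both extensions, $P_x\cup\{e\}$ and $P_y\cup\{e\}$. If either extension is itself a rightward forward path from $v$, then it reaches $x$ or $y$ along a route that already diverges from the corresponding tree path at $w$, so Corollary~\ref{rem1} again produces the contradiction. If both extensions fail, then $y$'s index must be strictly interior to the same-partition range of $P_x$ and $x$'s index strictly interior to the same-partition range of $P_y$. In that situation I would exhibit $e$ as a back edge in the sense of Definition~\ref{prp} to whichever of $P_x,P_y$ supplies the required non-terminal vertex, directly violating the path-restricted property of $G$.

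The main obstacle is this last sub-case: matching $e$ to the combinatorial back-edge template requires lining up the endpoints of $e$ with the terminals $u_a,u_b,v_c,v_d$ of the chosen path and verifying that a non-terminal $U$- or $V$-vertex of the path supplies the inequality $u_i>u_{a+1}$ (or its symmetric form) demanded by Definition~\ref{prp}. The expectation is that whenever both extensions fail simultaneously, at least one of $P_x,P_y$ is long enough to carry the required interior vertex, so the back-edge template does apply and the contradiction goes through.
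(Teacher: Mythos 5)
Your argument follows essentially the same route as the paper's own proof: assume a non-tree edge $e=(x,y)$ with both endpoints spanned by $T_r(v)$, use Corollary~\ref{rem1} to rule out $e$ lying on (or extending) a forward path from $v$, and then exhibit $e$ as a back edge to a tree path, contradicting the path restricted property. The sub-case you flag as the main obstacle is not actually an obstacle: if appending $e$ to $P_x$ fails to yield a forward path, this says precisely that $y$ lies strictly between the last same-partition vertex of $P_x$ and $v$ itself (equality is impossible since $y$ sits in a different subtree), while $x$ is the terminal, lowest-order vertex of $P_x$ in its own partition --- which is exactly the back-edge configuration; moreover the failure forces $P_x$ to contain at least two vertices of $y$'s partition, so the non-terminal vertex demanded by the back-edge definition exists automatically. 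This is the same one-line observation the paper makes (its inequality $v_{i'} < v_i$ for the tree path ending at $u_i$), so your proof is correct once that sentence is written out.
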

\begin{proof}
We show that for any vertex $v$ (let $v \in V$ w.l.o.g) in a $PRBG$, the subgraph induced by the vertices in $T_r(v)$ does not have any edge but the edges in $T_r(v)$.
On the contrary, let there exists an edge $(u_i,v_i) \in E$  s.t. this edge is not present in $T_r(v)$ and the vertices ($u_i \in U$ and $v_i \in V$) are spanned by $T_r(v)$. Recall that two forward paths emerging from a vertex
in the same direction never meet again (refer to Corollary~\ref{rem1}). Therefore, the edge $(u_i,v_i)$ does not belong to any forward path emerging from $v$. Let $u_j \in U$ be the vertex with the highest order incident to $v$.
Note that $u_i$ and $u_j$ are not the same vertices and $u_i < u_j$ (refer to Figure~\ref{u3}(a)). $u_i$ cannot have an edge incident to $v$, otherwise the edge $(u_i,v_i)$ belongs to a forward path originating from $v$ as shown in Figure~\ref{u3}(b).
But there exists a forward path passing through $v$ and $u_i$. Let $v_{i'} \in V$ be the vertex preceding $u_i$ in the forward path from $v$ to $u_i$. Observe
that $v_{i'} < v_i$. Thus, there exists a forward path with the range $\{\langle u_i,u_j\rangle,\langle v_{i'},v\rangle\}$. Therefore, the back edge $(u_i,v_i)$ is forbidden by the definition of $PRBGs$.
Thus, it leads to a contradiction to the assumption that there exists an edge between $u_i$ and $v_i$.
\end{proof}
\begin{figure}[ht]
\begin{minipage}[b]{0.5\linewidth}
\begin{center}
\includegraphics[scale=0.35]{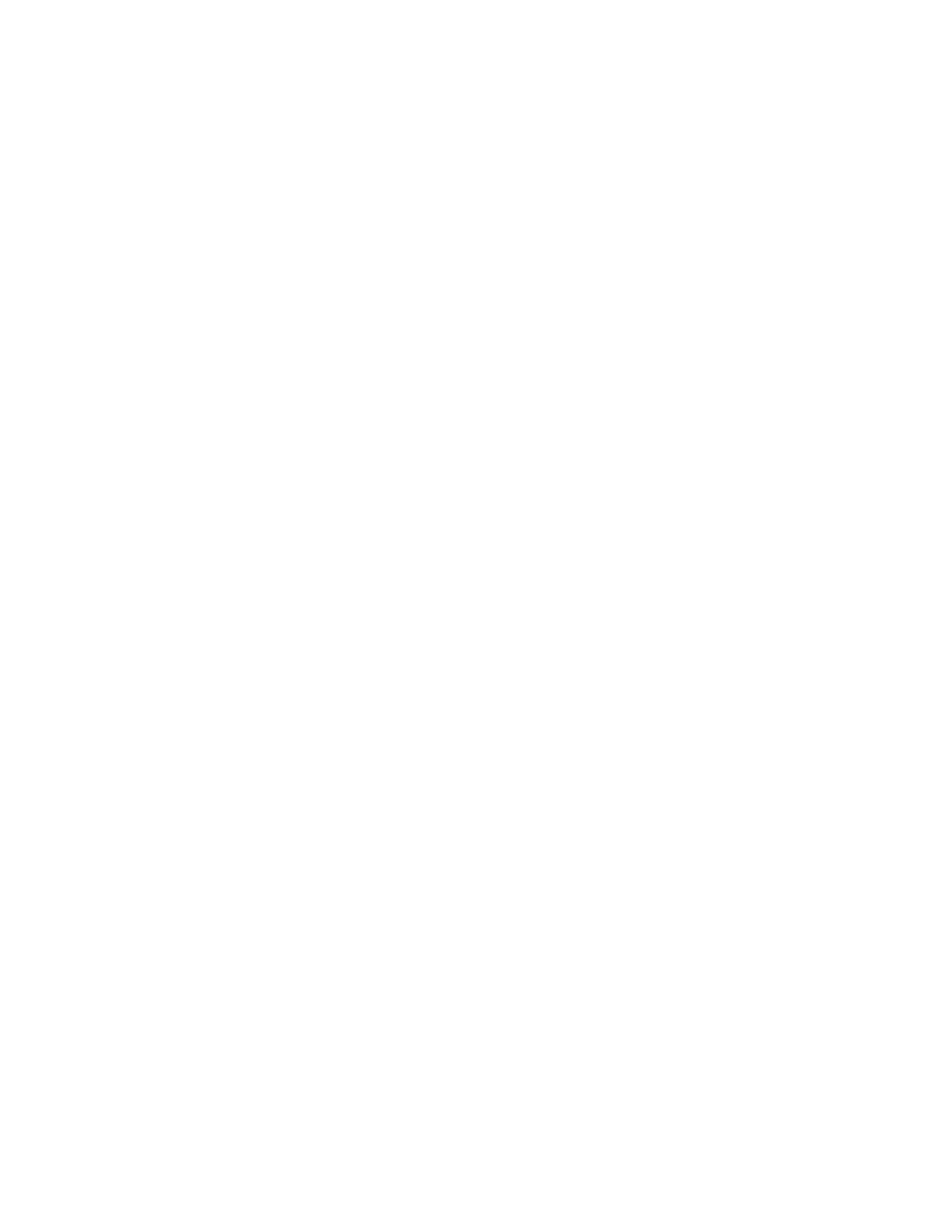}
\caption{Edge $(u_i,v_i)$ is forbidden}
\label{u3}
\end{center}
\end{minipage}
\hspace{0.5cm}
\begin{minipage}[b]{0.4\linewidth}
\centering
\includegraphics[scale=0.6]{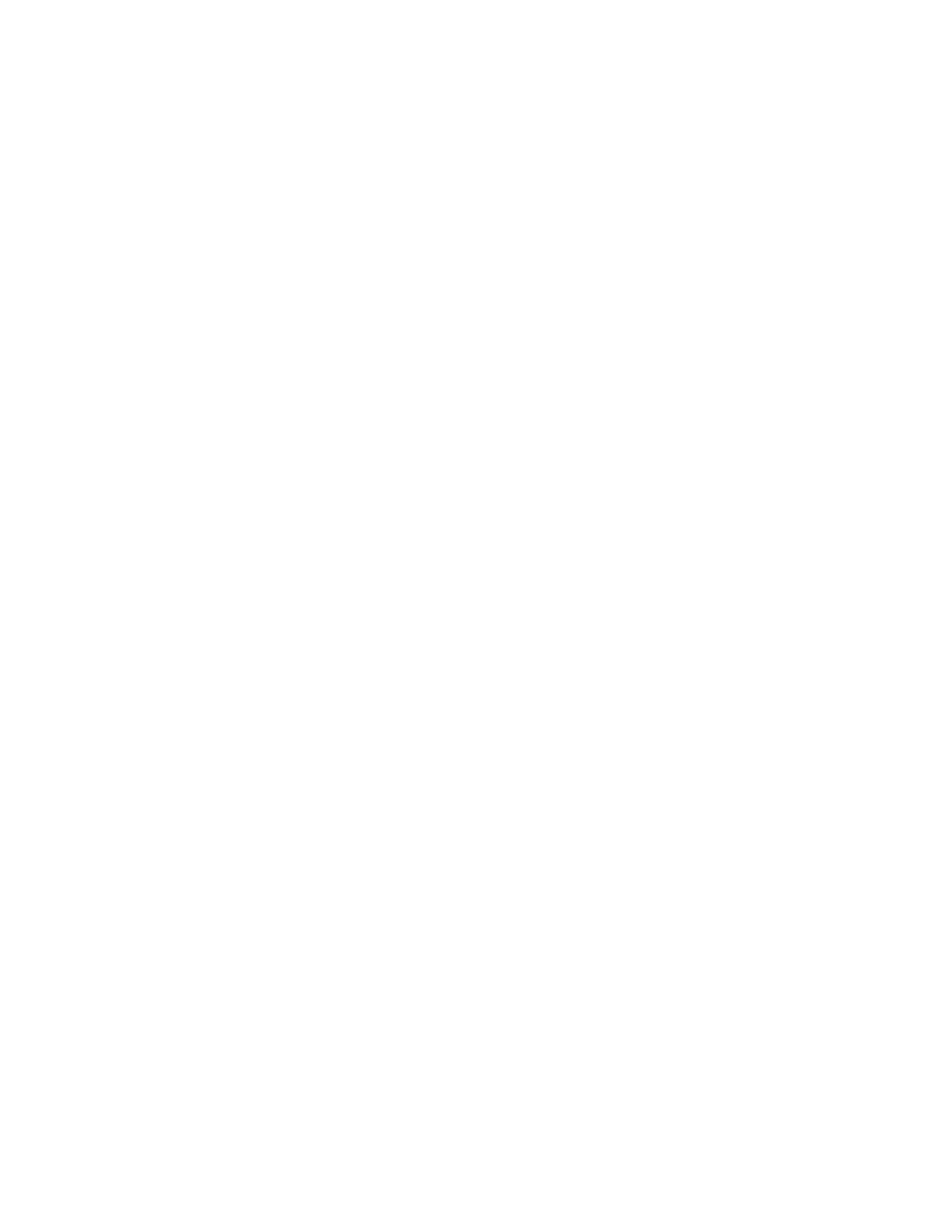}
\caption{Edges in $T_l(v)$}
\label{u4}
\end{minipage}
\end{figure}
\begin{lemma}
 For any vertex $v$ in a $PRBG$ $G = (U,V,E)$, all the forward paths in $T_l(v)$ have disjoint ranges.
\end{lemma}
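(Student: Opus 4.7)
My plan is to combine the tree structure of $T_l(v)$ (guaranteed by Corollary~\ref{rem1}) with a back-edge contradiction using the path-restricted property. The lemma will be established by showing that at every branching vertex of $T_l(v)$ the subtrees hanging off different children occupy disjoint order-ranges, and then composing this local disjointness to conclude disjointness of the ranges of the maximal forward paths emerging from $v$.

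First I would record that $T_l(v)$ is indeed a rooted tree at $v$: Corollary~\ref{rem1} says that two forward paths emerging from the same vertex in the same direction cannot meet again, so once two leftward forward paths from $v$ diverge they never share another vertex, and the collection of all such paths forms a tree rooted at $v$. Next I would focus on an arbitrary branching vertex $w$ of $T_l(v)$; WLOG take $w \in U$ (the case $w \in V$ is symmetric). Assume $w$ has at least two children $w_1, w_2 \in V$ with $w_1 < w_2$, corresponding to two distinct leftward forward paths $P_1, P_2$ through $w$ in $T_l(v)$. Let $T_1$ and $T_2$ be the subtrees of $T_l(v)$ rooted at $w_1$ and $w_2$. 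I want to show that every $V$-vertex appearing in $T_1$ is strictly below $w_2$, and symmetrically every $U$-vertex in $T_1$ is disjoint from the $U$-range swept out by $T_2$; these inequalities give the disjoint-range property between the two branches, and iterating across all branching vertices yields the lemma.

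The heart of the argument is by contradiction. Suppose some vertex $v^\ast \in V$ with $v^\ast \geq w_2$ appears in $T_1$. Then the leftward forward path from $v$ that passes through $w$, $w_1$, and continues down $T_1$ to $v^\ast$ contains a sub-path $Q$ from $w$ to $v^\ast$ which is itself a forward path; its range has $w$ as the extreme small $U$-endpoint and $v^\ast$ as the extreme large $V$-endpoint, and $w_1$ is the $V$-vertex of $Q$ adjacent to $w$. Because $w_1 < w_2 \leq v^\ast$, the vertex $w_2$ lies strictly in the interior of the $V$-range of $Q$. But the edge $(w, w_2)$ exists in $G$ (it is the first edge of $P_2$ after the branching), and it joins the $U$-endpoint $w$ of $Q$ to an interior $V$-vertex $w_2$ of the range of $Q$ while skipping past the adjacent path vertex $w_1$. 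That is exactly the configuration of a back edge to $Q$ in the sense of Definition~\ref{prp}, contradicting the path-restricted property. Hence no such $v^\ast$ exists, and the $V$-ranges of $T_1$ and $T_2$ are disjoint; the dual argument with the roles of $U$ and $V$ swapped handles the $U$-range.

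The step I expect to be the main obstacle is the careful bookkeeping needed to certify that the edge $(w, w_2)$ actually matches the formal back-edge definition: one must verify that $w_2$ sits strictly past the $V$-vertex $w_1$ that is adjacent to the endpoint $w$ on $Q$, and the partition labels must be tracked correctly depending on whether the branching vertex lies in $U$ or $V$. Once the single-branching disjointness is in hand, a straightforward induction down $T_l(v)$, applying the argument at every branching vertex, yields the global statement that all the forward paths in $T_l(v)$ have pairwise disjoint ranges beyond their shared prefix, which is what the lemma asserts.
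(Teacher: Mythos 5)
Your proposal is correct and follows essentially the same route as the paper: both arguments locate the point where two leftward paths diverge, build the forward path through that divergence point into one branch, and invoke the path-restricted property to forbid the back edge that an overlapping range would force, yielding the cross-separation of the two branches in both the $U$- and $V$-orders. Your version organizes this at each branching vertex of $T_l(v)$ rather than directly at the root, which handles shared prefixes a bit more explicitly, but the underlying mechanism is identical to the paper's proof.
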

\proof
Let us assume w.l.o.g. that $v \in V$. Consider two forward paths in $T_l(v)$ originating from $v$. Consider a path $P_1 = (v, u_1, v_1, \ldots)$ as shown in Figure~\ref{u4}.
Also consider the path $P_2 = (v,u_2,v_2, \ldots)$ where $v_1 < v_2$ (for $v_1,v_2 \in V$). Observe that there is a restriction that $u_1 > u_2$ ($u_1,u_2 \in U$), otherwise
the edge $(u_1,v_1)$ is forbidden by the {\it path restricted property}. Similarly, let $u_i \in U$ and $v_i \in V$ be the successive vertices in $P_1$
and let $u_j \in U$ and $v_j \in V$ be the successive vertices in $P_2$. By the {\it path restricted property}, it can be observed that if $v_i < v_j$, then $u_j < u_i$.
Therefore, the ranges of the paths $P_1$ and $P_2$ are disjoint.
\qed
\section{Number of edges in path restricted ordered bipartite graphs}\label{sc4}
In this section, we study $PRBGs$ for the maximum number of edges they can have. Note that motivated by geometric
orientation we use terms left and right to describe relative order of vertices interchange-
ably with higher and lower order of the vertices.
\begin{lemma}[Crossing lemma]\label{pl}
 Consider a $PRBG$ $G = (U, V, E)$ with a separator line $\ell$ partitioning $U$ (resp. $V$) into disjoint subsets $U_1$ and $U_2$ (resp. $V_1$ and $V_2$) 
 such that all the vertices in $U_1$ and $V_1$ are placed to the left of $\ell$ and all the vertices in $U_2$ and $V_2$ are placed to the right of $\ell$.
\begin{enumerate}
 \item If every vertex in $U_1$ has an edge incident to it with the other endpoint in $V_1$, then the number of edges between $U_1$ and $V_2$ (crossing $\ell$) is at most $|U_1| + |V_2|$.
 \item If every vertex in $V_1$ has an edge incident to it with the other endpoint in $U_1$, then the number of edges between $V_1$ and $U_2$ (crossing $\ell$) is at most $|V_1| + |U_2|$.
\end{enumerate}
\end{lemma}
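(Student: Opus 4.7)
The plan is to prove Part 1 by induction on $|U_1|$ and then obtain Part 2 by the symmetric argument in which the roles of $U$ and $V$ are exchanged. The base case $|U_1| = 0$ is immediate. For the inductive step I would single out $u^* \in U_1$ as the \emph{rightmost} vertex of $U_1$ (the one of lowest order, sitting immediately to the left of $\ell$), and enumerate its neighbors in $V_2$ in decreasing order as $v^1 > v^2 > \cdots > v^k$, so that $v^1$ is the leftmost of them.

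The technical heart of the argument is the structural claim that for every $i \ge 2$ the vertex $v^i$ is adjacent to no $U_1$-vertex other than $u^*$. I would prove it by contradiction: suppose $(u, v^i) \in E$ for some $u \in U_1 \setminus \{u^*\}$ and some $i \ge 2$. Since $u^*$ has the lowest order in $U_1$ we get $u > u^*$, and by the hypothesis of Part~1 the vertex $u$ has an anchor neighbor $w(u) \in V_1$. The edges $(u^*, v^i)$, $(v^i, u)$, $(u, w(u))$ combine into
\[
 P : \ u^* \longrightarrow v^i \longrightarrow u \longrightarrow w(u),
\]
which is a forward path because the $U$-sequence $(u^*, u)$ and the $V$-sequence $(v^i, w(u))$ are both increasing (any $V_2$-vertex precedes any $V_1$-vertex in the order). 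The range of $P$ is $\{\langle u^*, u\rangle, \langle v^i, w(u)\rangle\}$, and the edge $(u^*, v^1) \in E$ then qualifies as a back edge from the terminal $u^*$: we have $v^i < v^1 < w(u)$, so $v^1$ lies strictly inside the $V$-range, and $v^1 \ne v^i$ since $i \ge 2$, so $v^1$ is distinct from the path-neighbor $v^i$ of $u^*$ on $P$. This contradicts the path-restricted property.

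With the claim in hand the inductive step is direct: delete $u^*$ together with $v^2, \ldots, v^k$. The induced subgraph is still a PRBG, and every surviving vertex of $U_1$ still has its anchor in $V_1$, so by the induction hypothesis it contains at most $(|U_1| - 1) + (|V_2| - k + 1)$ crossing edges. The structural claim guarantees that deleting $v^2, \ldots, v^k$ destroyed no edges beyond those incident to $u^*$, so restoring the $k$ edges at $u^*$ recovers all of $E'$ and gives $|E'| \le |U_1| + |V_2|$. Part~2 follows by swapping $U \leftrightarrow V$ and $1 \leftrightarrow 2$: pick the rightmost $v^* \in V_1$, enumerate its $U_2$-neighbors $u^1 > u^2 > \cdots > u^k$, and use the hypothesized anchor in $U_1$ of any hypothetical additional neighbor $v \in V_1$ of $u^i$ ($i \ge 2$) to construct the analogous forward path in which $(v^*, u^1)$ becomes the forbidden back edge.

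The main obstacle I anticipate is arranging the forward-path orientation of $P$ correctly: picking $u^*$ of \emph{lowest} order is exactly what forces both the $U$- and the $V$-sequence of $P$ to be monotonically increasing, and the condition $i \ge 2$ is precisely what keeps $v^1$ strictly inside the $V$-range rather than coinciding with the path-neighbor $v^i$, so that the back edge really is forbidden.
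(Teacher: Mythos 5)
Your proposal is correct and rests on exactly the same key step as the paper's proof: taking the rightmost vertex $u^*$ of $U_1$, using the anchor edge into $V_1$ to build the forward path $u^* \to v^i \to u \to w(u)$, and showing that the edge to the leftmost neighbor $v^1$ would be a forbidden back edge, so that all crossing neighbors of $u^*$ except $v^1$ are private to $u^*$. The only difference is presentational — you finish by induction (deleting $u^*$ and $v^2,\ldots,v^k$) where the paper finishes with an equivalent unit-charging scheme sweeping $U_1$ from right to left.
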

\begin{figure}[h]
\begin{center}
 \includegraphics [scale=0.4]{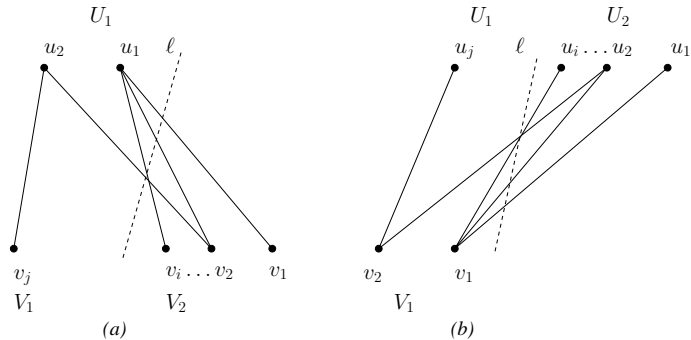}
 \caption{Edges across a partition line}
 \label{u9}
\end{center}
\end{figure}
\proof
An edge crossing the partition line $\ell$ is called the crossing edge.
Let us consider only the vertices (in either of $U_1, U_2, V_1$ and $V_2$) that have more than one crossing edges incident to them.
We give unit charge to all the vertices initially. A vertex can consume its charge to count for an edge.
We show that if every vertex is charged for the leftmost crossing edge incident to it, then all the edges are counted. 

Consider the rightmost vertex $u_1 \in U_1$ (the vertex with the least order in $U_1$) that has crossing edges incident to the vertices $v_1,v_2, \ldots, v_k$ as shown in Figure~\ref{u9}($a$).
We show that any of these vertices except $v_i$ cannot have an edge incident to a vertex in $U_1$ placed to the left of $u_1$ (higher order than $u_1$).
Let us assume on the contrary that $v_2$ has such an edge incident to the vertex $u_2$. 
By assumption $u_2$ has an edge incident to a vertex in $V_1$ (say $v_j \in V_1$), the edge does not intersect $\ell$ and it is placed to the left of it. Since, $v_i$ is placed to the right of $\ell$,
there exists a forward path with the range $\{\langle u_1,u_2\rangle,\langle v_2,v_j\rangle\}$ and the back edge $(u_1,v_i)$ is forbidden by the {\it path restricted property} since $v_i \in \langle v_2,v_j\rangle$. Thus, it contradicts to
the assumption that $v_2$ has an edge incident to $u_2$.
Since $u_1$ is the rightmost vertex in $U_1$ (with the least order in $U_1$), the vertices $v_1, \ldots, v_i$ except $v_i$ have only one crossing edge incident to them. These vertices consume their own charge to count the corresponding edges.
$u_1$ consumes its charge for the edge $(u_1,v_i)$.
Note that all the crossing edges incident to $u_1$ and its adjacent vertices across $\ell$ (except $v_1$) are counted. Also note that the charge of $v_i$ is still not consumed.
Now this charging scheme can be applied to the next vertex to the left of $u_1$. Subsequently, this procedure can be applied to all the vertices in $U_1$ from right to left and all the edges are counted.
Thus, if each vertex in $U_1$ and $V_2$ consumes its charge to count the leftmost edge incident to it, all the edges between $U_1$ and $V_2$ are counted.

Similarly for the proof of (2), if a vertex $v_1 \in V_1$ that has crossing edges incident to the vertices $u_1,u_2, \ldots u_i$ as shown in Figure~\ref{u9}($b$), then the vertices $u_1, \ldots, u_i$ except $u_i$
cannot have an incident to a vertex in $V_1$ placed to the left of $v_1$. 
A similar argument can be made to  show that if each vertex in $V_1$ and $U_2$ consumes its charge to count the leftmost edge incident to it, then all the edges between $V_1$ and $U_2$ are counted.
\qed
\begin{corollary}\label{plc}
Consider the partition of a $PRBG$ as described in Lemma~\ref{pl}. No two vertices in $U_2$ (resp. $V_2$) with edges incident to $V_1$ (resp. $U_1$), i.e. to the vertices left of partition line
can have an edge incident to a common vertex in $V_2$ (resp. $U_2$), i.e. to a vertex at the right of partition line.
\end{corollary}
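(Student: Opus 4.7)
I plan to prove Corollary~\ref{plc} by contradiction, using the path-restricted property on a carefully chosen forward path. I treat the primary assertion; the \emph{resp.} version follows by the $U$--$V$ symmetry of the PRBG definition. Assume for contradiction that distinct $u, u' \in U_2$ both have edges into $V_1$, say $(u, v_a), (u', v_b) \in E$ with $v_a, v_b \in V_1$, and that they share a common neighbor $v \in V_2$, so that $(u, v), (u', v) \in E$. After possibly swapping the labels $u \leftrightarrow u'$ (together with $v_a \leftrightarrow v_b$), I normalize the configuration to be ``nested'': $u < u'$ in the $U$-order and $v_a \le v_b$ in the $V$-order.

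The central step is to form the forward path
\[
P = (u,\, v,\, u',\, v_b).
\]
Its $U$-sequence $u < u'$ is increasing by the normalization, and its $V$-sequence $v < v_b$ is increasing because $v \in V_2$ sits to the right of the partition line $\ell$ while $v_b \in V_1$ sits to its left. Hence $P$ is a valid forward path with range $\{\langle u, u'\rangle,\langle v, v_b\rangle\}$. The edge $(u, v_a)$ is then a back edge to $P$: the vertex $u$ is an endpoint of $P$; we have $v < v_a \le v_b$, so $v_a \in \langle v, v_b\rangle$; and the neighbor of $u$ in $P$ on the $V$-side is $v \ne v_a$. This violates the path-restricted property of the PRBG, yielding the desired contradiction.

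The step I expect to be the main obstacle is justifying the normalization to the nested configuration. After fixing $u < u'$, the residual ``crossing'' sub-case $v_a > v_b$ does not immediately succumb to the same path construction: neither $(u, v, u', v_b)$ nor the reversed candidate $(v_a, u, v, u')$ places the opposing $V_1$-edge inside a monotone $V$-range. My backup plan is to exploit the standing hypothesis of Lemma~\ref{pl} --- every vertex of $U_1$ has an edge into $V_1$ --- to extend either $(u, v_a)$ or $(u', v_b)$ through a $U_1$-vertex, widening the range of a longer forward path until it swallows the other $V_1$-edge as a back edge. Failing that, I would appeal to Corollary~\ref{rem1} applied to the two leftward forward paths $(v, u, v_a)$ and $(v, u', v_b)$ emanating from $v$: their $V$-ranges are properly nested, which (combined with the structural results in Section~\ref{sc3} on disjointness of ranges in $T_l(v)$) is incompatible with both $u$ and $u'$ being neighbors of the same $v$.
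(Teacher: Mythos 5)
Your nested case is sound and is essentially the same move the paper makes inside the proof of Lemma~\ref{pl}: the path $u\to v\to u'\to v_b$ is forward and $(u,v_a)$ with $v<v_a\le v_b$ is a forbidden back edge. The genuine gap is the crossing sub-case $u<u'$ with $v_a>v_b$, which you flag but do not close, and neither of your backup plans works as described. Note first that the crossing case cannot be dismissed without invoking the hypothesis of Lemma~\ref{pl}: the configuration consisting only of the four edges $(u,v)$, $(u',v)$, $(u,v_a)$, $(u',v_b)$ with $v<v_b<v_a$ and $u<u'$ contains no forbidden back edge at all (no forward path contains both $u'$ and $v_a$, so no range ever captures the offending edge), so it is a legitimate $PRBG$ and a counterexample to the statement read without that hypothesis. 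Your second backup plan in particular cannot succeed: the disjoint-range lemma for $T_l(v)$ asserts precisely that the path reaching the \emph{lower} $V$-vertex must use the \emph{higher} $U$-vertex, which is exactly the crossing configuration, so that lemma is consistent with the bad picture rather than incompatible with it.

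Your first backup plan is the right idea but cites the wrong half of the hypothesis and aims at the wrong back edge. For the $U_2$-into-$V_1$ version you need item (2) of Lemma~\ref{pl} (every vertex of $V_1$ has a neighbour in $U_1$), not item (1); and the edge that gets forbidden cannot be ``the other $V_1$-edge,'' since under the paper's definition a back edge must emanate from the lowest-order vertex of the path's range, and neither endpoint of $(u',v_b)$ is ever that vertex once $u$ and $v$ are in the path. The edge that is forbidden is $(v,u')$. Concretely: let $u_2\in U_1$ be a neighbour of $v_a$; then $v\to u\to v_a\to u_2$ is a forward path (both sequences increase, since $v_a$ and $u_2$ lie to the left of $\ell$) with range $\{\langle u,u_2\rangle,\langle v,v_a\rangle\}$, and $(v,u')$ is a back edge because $u<u'<u_2$. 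This single construction handles the nested and crossing cases uniformly, so it could replace your case split entirely.
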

\begin{corollary}\label{plt}
Consider the partition of a $PRBG$ as described in Lemma~\ref{pl}. Let $G_1 = (U_1, V_1,E_1)$ be a graph with all the vertices to the left of $\ell$.
Let $V'_2 \subseteq V_2$ (resp. $U'_2 \subseteq U_2$) be the set of vertices such that $\forall v \in V'_2$ (resp. $\forall u \in U'_2$) has at least two edges incident to $U_1$ (resp. $V_1$), then the number of edges
between $U_1$ (resp. $V_1$) and $V'_2$ (resp. $U'_2$) is $O|U_1|$ (resp. $O|V_1|$).
\end{corollary}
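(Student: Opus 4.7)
The plan is to reduce Corollary~\ref{plt} to Lemma~\ref{pl} via passage to an induced subgraph, and then derive the linear-in-$|U_1|$ bound by a short double-counting argument that exploits the hypothesis that every vertex of $V'_2$ has at least two crossing edges.

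The first step is to discard the vertices of $V_2 \setminus V'_2$ (together with all edges touching them), obtaining the induced subgraph $H$ on the vertex set $U \cup V_1 \cup V'_2$. Since any subgraph of a $PRBG$ is still a $PRBG$, $H$ remains path-restricted. Moreover, the hypothesis of Lemma~\ref{pl}(1) --- that every vertex of $U_1$ has an edge into $V_1$ --- concerns only edges to the left of $\ell$, and is therefore inherited by $H$ unchanged. Applying Lemma~\ref{pl}(1) inside $H$, with $V'_2$ playing the role of $V_2$, I would obtain the bound
\[
|E'| \;\le\; |U_1| + |V'_2|,
\]
where $E'$ denotes the set of crossing edges between $U_1$ and $V'_2$.

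The second step is essentially one line of double counting: since every $v \in V'_2$ is incident to at least two edges of $E'$, we also have $|E'| \ge 2|V'_2|$. Combining the two inequalities yields $|V'_2| \le |U_1|$, and substituting back gives $|E'| \le 2|U_1| = O(|U_1|)$. The symmetric bound for $U'_2$ follows by the same argument, invoking Lemma~\ref{pl}(2) on the induced subgraph on $V \cup U_1 \cup U'_2$.

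The only point that requires any care is checking that the hypothesis of Lemma~\ref{pl} survives the restriction to $H$; as noted, this is immediate because the hypothesis involves only left-of-$\ell$ edges, which are untouched by removing right-of-$\ell$ vertices. Beyond that verification I do not anticipate any real obstacle --- once the induced-subgraph reduction is in place, the bound drops out from a single pigeonhole-style count.
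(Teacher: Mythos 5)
Your argument is correct, but it is not the route the paper takes, so a comparison is in order. The paper does not invoke Lemma~\ref{pl} here at all: it observes that the crossing edges between $U_1$ and $V'_2$ induce a forest whose components are right trees $T_r(u)$ rooted at vertices $u\in U_1$ (this rests on the earlier lemma that the subgraph induced by the vertices of $T_r(u)$ contains no edges beyond the tree itself), and then notes that the degree-$\ge 2$ condition on $V'_2$ forces every leaf of these trees to lie in $U_1$, whence the edge count is $O(|U_1|)$. Your proof replaces the forest-structure step by an application of the crossing lemma to the induced subgraph on $U\cup V_1\cup V'_2$, getting $|E'|\le |U_1|+|V'_2|$, and then the same pigeonhole ($|E'|\ge 2|V'_2|$, hence $|V'_2|\le|U_1|$ and $|E'|\le 2|U_1|$) finishes it; the two observations you need along the way --- that an induced subgraph of a $PRBG$ is a $PRBG$, and that the hypothesis of Lemma~\ref{pl}(1) survives deletion of right-of-$\ell$ vertices --- are both sound. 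What your route buys is brevity and an explicit constant ($2|U_1|$); what it costs is a dependence on the hypothesis of Lemma~\ref{pl}(1), namely that \emph{every} vertex of $U_1$ has an edge into $V_1$. The corollary's statement does not explicitly restate that hypothesis (it only says ``the partition as described in Lemma~\ref{pl}''), whereas the paper's forest argument does not need it. In the places where the corollary is actually applied ($U_1,V_1$ are the core vertices of a module, spanned by a connected left tree), the hypothesis does hold, so your proof covers every use the paper makes of the statement; you should just state explicitly that you are importing that hypothesis from the setup of Lemma~\ref{pl}.
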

\proof
The edges between $U_1$ and $V'_2$ (resp. $V_1$ and $U'_2$) are crossing $\ell$. Note that the subgraph induced by these edges is a forest of right
trees, i.e. each component is $T_r(u)$ for some $u \in U_1$ (resp. $T_r(v)$ for some $v \in V_1$). Since each vertex in $U'_2$ {resp. $V'_2$) has
degree of at least two, it implies that all the leaves in these trees are in $U_1$ (resp. $V_1$). Thus, number of such edges is $O|U_1|$ (resp. $O|V_1|$).
\qed
\begin{theorem} \label{T2}
Any path restricted ordered bipartite graph $G = (U,V,E)$ has at most $n \log n + O(n)$ edges where $n = |U| + |V|$. The bound is tight as there exists a path restricted ordered bipartite graph on $n$ vertices with $\Omega(n \log n)$ edges.
\end{theorem}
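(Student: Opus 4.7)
The plan is to prove the upper bound by balanced divide-and-conquer using the Crossing Lemma (Lemma~\ref{pl}) and Corollary~\ref{plt} to control the crossing edges, and to prove matching tightness by a recursive doubling construction.

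Given a PRBG $G = (U, V, E)$ on $n = |U| + |V|$ vertices, I would pick a separator line $\ell$ that splits the common total order into a left half $U_1 \cup V_1$ and a right half $U_2 \cup V_2$, each of size about $n/2$. The path-restricted property is hereditary for vertex-induced subgraphs, since any forward path and any back edge in an induced subgraph are also a forward path and back edge in $G$. Hence both halves are PRBGs and the recursion is set up: the edges of $G$ decompose as edges entirely on the left (handled by $T(\lceil n/2 \rceil)$), edges entirely on the right (handled by $T(\lfloor n/2 \rfloor)$), and crossing edges, which it remains to bound.

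The crossing edges split into two types, namely edges between $U_1$ and $V_2$ and edges between $V_1$ and $U_2$. For the first type, I would partition $V_2$ into vertices of crossing-degree exactly one, which contribute at most $|V_2|$ edges in total, and the subset $V'_2$ of vertices of crossing-degree at least two. By Corollary~\ref{plt}, the edges between $U_1$ and $V'_2$ number $O(|U_1|)$, since the corresponding subgraph is a forest of right-trees whose leaves all lie in $U_1$. A symmetric argument bounds the $V_1$-to-$U_2$ crossings by $|U_2| + O(|V_1|)$. Summing, the crossing count is at most $n + O(1)$, and the recurrence $T(n) \le 2\,T(n/2) + n + O(1)$ solves to $T(n) \le n \log n + O(n)$, which is the claimed upper bound.

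The main obstacle is extracting the leading coefficient $1$ in front of $n \log n$, since the naive application of Corollary~\ref{plt} yields only $O(|U_1|)$ with an unspecified constant. I would handle this by arranging the charging so that each right-side vertex of crossing-degree one pays exactly one unit (contributing a total of at most $|U_2| + |V_2| \le n/2$ at each level, summed over both types), while the multi-degree contributions, being linear in $|U_1|$ and $|V_1|$, are absorbed into the additive $O(n)$ term of the recurrence without inflating the leading coefficient. For the matching lower bound, I would exhibit a recursive doubling construction: build $G_{2k}$ from two disjoint copies of $G_k$ placed side by side in the total order, together with $\Omega(k)$ carefully chosen forward edges between them that do not violate the path-restricted property on any forward path in the combined graph. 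The resulting recurrence $f(2k) \ge 2\, f(k) + \Omega(k)$ gives $f(n) = \Omega(n \log n)$, establishing tightness.
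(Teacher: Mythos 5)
Your overall divide-and-conquer shape matches the paper's, but the divide step is where your argument breaks, and the break is exactly at the point you yourself flag as ``the main obstacle.'' With a plain balanced split of the order into a left half and a right half, a vertex of $U_1$ need not have any neighbour in $V_1$: the hypothesis of Lemma~\ref{pl} (every vertex of $U_1$ has an edge staying on the left of $\ell$) is not guaranteed, so you cannot charge the crossing edges one per vertex and must fall back on Corollary~\ref{plt} for the multi-degree right-side vertices. That costs an extra $\Theta(|U_1|+|V_1|)=\Theta(n)$ per level of recursion, and your claim that this can be ``absorbed into the additive $O(n)$ term without inflating the leading coefficient'' is false: the recurrence $T(n)\le 2T(n/2)+\alpha n$ solves to $\alpha n\log n+O(n)$, so every per-level linear cost shows up in the coefficient of $n\log n$. (Your intermediate assertion that the crossing count is ``at most $n+O(1)$'' conflates a per-vertex $O(1)$ charge with an $O(1)$ total.) As written, your proof yields only $Cn\log n$ for some constant $C>1$, not the stated $n\log n+O(n)$, which is the whole point of the theorem (and of the derived $2n\log n+O(n)$ bound for $LGG$s). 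The paper avoids this by not splitting positionally: it scans $V$ from left to right, placing each scanned vertex \emph{together with all its $U$-neighbours} into $S_1$ until $|S_1|\ge n/2$. This makes $S_1$ neighbour-closed on the left of $\ell$ (so all $S_1$--$S_2$ edges cross $\ell$) and guarantees every vertex of $S_1$ has an edge inside $S_1$, which is precisely the hypothesis Lemma~\ref{pl} needs; the crossing edges are then at most $n-1$ per level and the leading constant $1$ follows. If you want to rescue your version, you must either adopt such a neighbour-closed partition or find another way to certify the Crossing Lemma's hypothesis after a positional split.

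A secondary gap: your lower bound is only a plan. You posit a doubling construction with ``$\Omega(k)$ carefully chosen forward edges'' between two copies but never specify these edges or verify that no forward path in the combined graph acquires a back edge; that verification is the entire content of such a construction. The paper discharges this step by pointing to an explicit $0$--$1$ matrix construction in the literature rather than proving it, so at minimum you would need to supply the explicit edge set and check the path-restricted property for it.
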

\proof
We propose a simple divide and conquer technique to get the desired bound. A partition line $\ell$ is drawn dividing the vertices into two halves. Now, we divide the vertices into two disjoint subsets $S_1$
and $S_2$ as shown in Figure~\ref{u5}. All the vertices in $S_1$ are placed to the left of $\ell$ whereas the vertices in $S_2$ can be placed to both sides of $\ell$.
A simple procedure is used to obtain the partition.
In the partition $V$, the vertices are scanned from left to right. These vertices along with all their neighbors in $U$ are included in $S_1$. The process is stopped when $S_1$ has at least $\frac{n}{2}$ vertices.
Consider the situation when before scanning a vertex $v_i$, there are less than $\frac{n}{2}$ vertices in $S_1$. After $v_i$ is scanned, there are more than $\frac{n}{2}$ vertices in $S_1$. Note that all the new vertices added
to $S_1$ while scanning $v_i$ are the pendant vertices within $S_1$, i.e. these vertices have only one edge incident to them in the subgraph induced on the vertices in $S_1$. 
All other edges incident to these vertices cross $\ell$. These vertices are called the {\it terminal vertices}. The partition obtained by this procedure has the following properties.
\begin{enumerate}
 \item If any edge incident to a vertex in $S_1$ has its other end point to the left
of $\ell$, then the corresponding vertex must be in $S_1$.
 \item For any vertex in $S_1$, there is at least one edge incident to another vertex in $S_1$, i.e. both the vertices defining the edge are placed to the left of $\ell$.
\end{enumerate}

Let us now consider the edges with one end point in $S_1$ and the other end point in $S_2$. All such edges must cross the line $\ell$ by property (1) of the partition.
Lemma~\ref{pl} can be applied to count such edges due to property (2) of the partition.
By Lemma~\ref{pl}, the maximum number of these edges is at most the summation of the number of vertices in $S_1$ and the number of vertices (in $S_2$) that are placed to the right of $\ell$.
Thus, the number of such edges is at most $n-1$. Let $\mathcal{T}(P)$ denote the maximum number of edges a $PRBG$ on a vertex set $P$ can have, then $\mathcal{T}(S_1 \cup S_2) \le \mathcal{T}(S_1) + \mathcal{T}(S_2) + n-1$.
Terminal vertices can be dropped from $S_1$ as they have only one edge incident to them. Thus, both the partitions $S_1$ and $S_2$ have at most $\frac{n}{2}$ vertices.
Now the same procedure can be independently applied to count the edges in $S_1$ and $S_2$ recursively. Thus, $\mathcal{T}(U \cup V) = n \log n + O(n)$. It proves that the number of edges in $G$ is at most $n \log n + O(n)$.
\begin{figure}[h]
\begin{center} \label{u5}
 \includegraphics [scale=0.4]{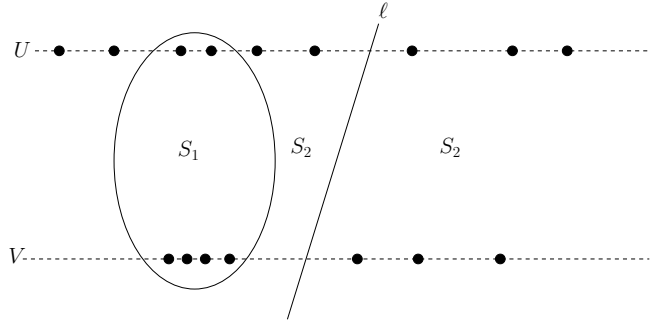}
 \caption{Partition of the point set}
\end{center}
\end{figure}

A matching lower bound can be obtained by a similar 0-1 matrix proposed in ~\cite{tds}.
\qed
\begin{theorem}
 Any unit distance graph on a convex independent point set with $n$ points has at most $2n \log n + O(n)$ edges.
\end{theorem}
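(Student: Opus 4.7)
The plan is essentially to assemble the machinery developed in Sections 2 and 4 rather than to prove anything fundamentally new. Given a unit distance graph $G=(P,E)$ on a convex independent $n$-point set, I would first invoke Lemma~\ref{ap} to pick two antipodal points $p_1,p_2$ and discard the at most $2n$ edges that fail to cross the line $\overline{p_1p_2}$. The remaining edges form a bipartite graph $(U,V,E')$ between the two halves of $P$.

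Next, I would partition $E'$ into $E_1,E_2$ by the acute-angle rule described in Section~\ref{sc2} (for each edge $(u_i,v_j)$ decide by which of $\angle u_iv_jv_{j+1}$ or $\angle u_iv_jv_{j-1}$ is acute), obtaining $G_1=(U,V,E_1)$ and $G_2=(U,V,E_2)$. After applying left trimming to $G_1$ and right trimming to $G_2$, which together discard at most another $n$ edges, I get graphs $G'_1,G'_2\in G_{UDG}$. Since Remark~\ref{uisl} says a UDG is an LGG, the class $G_{UDG}$ is contained in $G_{LGG}$, so by Lemma~\ref{lggpbg} (and the corollary immediately following it) both $G'_1$ and $G'_2$ are path-restricted ordered bipartite graphs.

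Now I would apply Theorem~\ref{T2} to each of $G'_1$ and $G'_2$ separately. Each contributes at most $n\log n + O(n)$ edges, which gives $2n\log n + O(n)$ edges in total. Adding back the at most $2n$ edges discarded via Lemma~\ref{ap} and the at most $n$ edges discarded by trimming only contributes $O(n)$, which is absorbed into the $O(n)$ term. Hence $|E|\le 2n\log n + O(n)$.

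There really is no hard step here: the theorem is a direct bookkeeping combination of Lemma~\ref{ap}, the decomposition of Section~\ref{sc2}, the fact that the resulting pieces are PRBGs, and the $n\log n+O(n)$ bound of Theorem~\ref{T2}. The only thing to be careful about is that the two trimming passes and the antipodal pruning each remove only $O(n)$ edges, so that the final constant in front of the leading $n\log n$ term is exactly $2$ and not larger. (The stronger $O(n)$ bound advertised in the abstract requires the separate, more delicate argument specific to UDGs, but is not what is being claimed in this particular theorem.)
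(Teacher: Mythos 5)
Your proposal is correct and follows exactly the paper's own route: antipodal split via Lemma~\ref{ap}, the acute-angle partition and trimming of Section~\ref{sc2} to obtain two graphs in $G_{UDG}$, the observation that these are PRBGs (Lemma~\ref{lggpbg} via Remark~\ref{uisl}), and Theorem~\ref{T2} applied to each piece, with the $O(n)$ discarded edges absorbed into the lower-order term. Nothing is missing.
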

\proof
Recall from Section~\ref{sc2} that a $UDG$ on a convex point set can be decomposed into two graphs in $G_{UDG}$ by removing at most $3n$ edges.
A graph in $G_{UDG}$ is also a $PRBG$ that has at most $n \log n + O(n)$ edges. Thus, it concludes that any unit distance graph on a convex point set
has at most $2n \log n + O(n)$ edges.
\qed

Similarly, the following theorem can be established for the $LGGs$.
\begin{theorem}
 Any locally Gabriel graph on a convex independent point set with $n$ points has at most $2n \log n + O(n)$ edges.
\end{theorem}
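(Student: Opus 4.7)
The plan is to mirror exactly the argument used for the unit distance theorem, substituting the $LGG$ analogue of Lemma~\ref{ap} at the one place where a $UDG$-specific fact was invoked. Concretely, I would decompose a given $LGG$ on a convex point set into two graphs in the class $G_{LGG}$ by deleting only $O(n)$ edges, observe that each such graph is a $PRBG$ by Lemma~\ref{lggpbg}, and finally bound the edges of each piece by Theorem~\ref{T2}.

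First, I invoke the extension of Lemma~\ref{ap} to $LGGs$, noted in Section~\ref{sc2} (from~\cite{gclgg}): there are antipodal points $p_1,p_2$ on the convex hull such that all but at most $2n$ edges cross the segment $\overline{p_1p_2}$. I delete those at most $2n$ non-crossing edges and retain the bipartite graph between $U$ (points on one side of $\overline{p_1p_2}$) and $V$ (points on the other side), with the orderings induced by convex position.

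Next, I split this bipartite graph into $G_1=(U,V,E_1)$ and $G_2=(U,V,E_2)$ exactly as described in Section~\ref{sc2}: for each edge $(u_i,v_j)$, convexity forces at least one of $\angle u_iv_jv_{j+1}$ or $\angle u_iv_jv_{j-1}$ to be acute, and the edge is assigned to $E_1$ or $E_2$ accordingly (breaking ties arbitrarily). In this way $G_1$ uses the increasing orientation of both sides and $G_2$ the reversed one. I then perform left trimming on $G_1$ and right trimming on $G_2$, removing one edge per vertex of $V$, for an additional loss of at most $n$ edges, to obtain graphs $G'_1,G'_2$ lying in the class $G_{LGG}$.

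Finally, Lemma~\ref{lggpbg} asserts that every graph in $G_{LGG}$ is a $PRBG$, and Theorem~\ref{T2} gives at most $n\log n + O(n)$ edges for any $PRBG$ on $n$ vertices. Summing the two $PRBG$ bounds with the $3n$ edges deleted during the decomposition yields
\[
|E| \;\le\; 2\bigl(n\log n + O(n)\bigr) + 3n \;=\; 2n\log n + O(n),
\]
which is the claimed bound. The only nontrivial step is verifying that the antipodal lemma indeed holds for $LGGs$, but this is already quoted from~\cite{gclgg} in Section~\ref{sc2}; everything else is a direct reuse of the machinery developed earlier for the $UDG$ theorem. Thus the proof is essentially a one-line reduction, and I expect no new obstacle.
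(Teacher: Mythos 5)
Your proposal is correct and follows essentially the same route as the paper: the paper proves the $UDG$ version by decomposing into two graphs of $G_{UDG}$ (losing at most $3n$ edges via the antipodal lemma and trimming) and applying Theorem~\ref{T2} to each $PRBG$, and then states that the $LGG$ case follows ``similarly,'' which is precisely the reduction you spell out using the $LGG$ extension of Lemma~\ref{ap} from~\cite{gclgg} and Lemma~\ref{lggpbg}. No gap; your write-up is in fact more explicit than the paper's own one-line appeal to similarity.
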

\section{Number of edges in unit distance graphs}\label{sc5}
In this section, first we show that Class $G_{UDG}$ is a strict sub class of the class $G_{LGG}$. It also establishes that the class of $UDGs$ on convex point sets
is a strict sub class of the $LGGs$ on convex independent point sets. Then, we show that a graph in $G_{UDG}$ and therefore a $UDG$ on a convexly independent
point set has at most a linear number of edges. The arguments for the same have a following high level approach. We define a subgraph of the
class $G_{UDG}$ called {\it module} and argue that any graph in $G_{LGG}$ is formed by interconnected modules. Each module has a linear number of
edges and Subsequently, the number of interconnecting edges is linear too.\\
Recall that two graphs $G'_1$ and $G'_2$ in $G_{UDG}$ are obtained from a $UDG$ on a convexly independent point (refer to Section~\ref{sc2}). In this section,
all the arguments are presented for $G'_1$ type of $G_{UDG}$. Symmetric arguments can be made for $G'_2$ in support of all the required proofs.
\begin{lemma} \label{ul}
 The class $G_{UDG}$ is a strict sub class of the class $G_{LGG}$.
\end{lemma}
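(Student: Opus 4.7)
The statement splits into two parts: the inclusion $G_{UDG} \subseteq G_{LGG}$, and the strictness of this inclusion. The first part is essentially bookkeeping. By Remark~\ref{uisl}, every unit distance graph on a convex point set is simultaneously a locally Gabriel graph on that same point set. The construction used to extract a member of $G_{UDG}$ from a $UDG$ (fix an antipodal pair, keep only edges crossing $\overline{p_1 p_2}$, split the surviving edges into $E_1, E_2$ by the acute-angle rule, then perform left/right trimming) is the very same procedure used to extract a member of $G_{LGG}$ from an $LGG$. Hence, viewing a $UDG$ as an $LGG$ and running the procedure places its output inside $G_{LGG}$, which gives $G_{UDG} \subseteq G_{LGG}$.

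The substantive part is exhibiting a witness $G \in G_{LGG} \setminus G_{UDG}$. My plan is to construct a small convex point set $P$ together with an explicit $LGG$ $H$ on $P$, compute the ordered bipartite graph $G$ produced by the $G_{LGG}$ procedure, and then argue that $G$ cannot arise from any $UDG$ on any convex point set via the $G_{UDG}$ procedure. The intuition driving the choice of $H$ is the basic structural gap between the two classes: in a $UDG$ all neighbors of a vertex $u$ lie on a unit circle centered at $u$, so their locations on the convex hull are sharply constrained; in an $LGG$ one only needs the disk-avoidance condition, which is strictly weaker and allows much more flexible configurations of neighbors with differing edge lengths.

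Concretely, I would target a pattern of edges in $G$ that, if realized by unit distances on a convex point set, would force a specific angular/metric relation incompatible with convex position. A natural candidate is a vertex $u \in U$ that, in the extracted bipartite graph, has several neighbors $v_{i_1}, v_{i_2}, v_{i_3}$ in $V$ arranged so that the acute-angle $LGG$ constraint is obviously satisfied, but the convex positions force $|u v_{i_1}|, |u v_{i_2}|, |u v_{i_3}|$ to be pairwise distinct. One then checks that no alternative choice of convex point set and antipodal pair could yield the same extracted graph while making all these edges unit length.

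The main obstacle is the last step: having built $H$ and extracted $G$, one must rule out that $G$ arises from \emph{any} $UDG$ on \emph{any} convex point set, not merely from the particular realization one started with. I would handle this by isolating a combinatorial invariant of $G$ (for instance a forbidden sub-pattern in the adjacency structure, or a constraint on the degrees of certain consecutive vertices) which any $G_{UDG}$ graph must satisfy because of unit-length arcs in convex position, but which the constructed $G$ violates. Once such an invariant is identified, strictness follows immediately, so the bulk of the effort lies in picking $H$ minimal enough to make this invariant easy to state and verify.
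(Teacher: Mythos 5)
Your treatment of the inclusion $G_{UDG} \subseteq G_{LGG}$ matches the paper's and is fine: it is immediate from Remark~\ref{uisl} together with the fact that the extraction procedure is the same for both classes. The problem is the strictness direction, which is where all the content lies. What you have written there is a plan, not a proof: you say you \emph{would} construct a convex point set, you \emph{would} target a pattern forcing pairwise distinct edge lengths, and you \emph{would} isolate a combinatorial invariant violated by your example --- but none of these objects is actually produced. In particular you explicitly flag the hardest step (ruling out that the extracted graph arises from \emph{any} $UDG$ on \emph{any} convex point set) as ``the main obstacle'' and then defer it. Since no forbidden pattern is exhibited and no argument for its infeasibility is given, the strictness claim is not established.

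For comparison, the paper's proof does exactly what you gesture at, but concretely. It exhibits a specific small configuration (Figure~\ref{u6}(a)): edges $(u_1,v_1)$, $(u_1,v_3)$, $(u_3,v_3)$, $(u_2,v_2)$ together with a candidate common neighbor $u_4$ of $v_1$ and $v_2$. The key step you are missing is the use of the \emph{isosceles triangle property} of unit distances: since $|u_1v_1| = |u_1v_3| = |u_3v_3| = 1$, the base angles $\angle v_3u_3u_1$ and $\angle u_1v_1v_3$ of the relevant isosceles triangles are acute, which combined with the acute-angle condition defining $G'_1$ forces $\angle v_1u_1u_3 > \frac{\pi}{2}$ in the quadrilateral $v_1u_1u_3v_3$, hence $\angle v_1u_1u_2 > \frac{\pi}{2}$ by convexity, hence $|u_2v_1| > |u_1v_1| = 1$. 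The perpendicular bisector of $\overline{v_1v_2}$ then separates $u_4$ from one of $v_1,v_2$, so $u_4$ cannot be at unit distance from both; this is the metric contradiction, and it is intrinsic to the pattern rather than to a particular realization, which is precisely how the paper sidesteps the ``any $UDG$, any point set'' obstacle you identified. The paper then notes the same pattern is realizable as an $LGG$ with $U$ and $V$ monotone in opposite quadrants. Your intuition about unit circles constraining neighbor locations is the right one, but without the explicit pattern and the isosceles/bisector argument the proof is incomplete.
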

\begin{figure}[ht]
\centering
\includegraphics[scale=0.3]{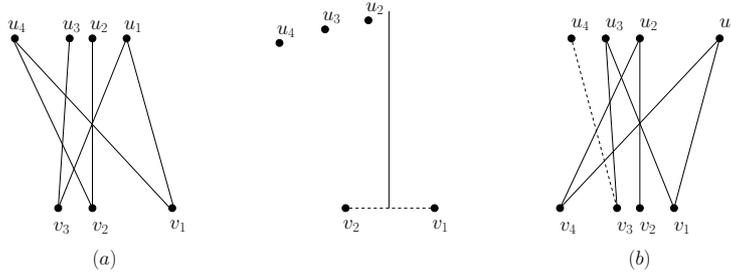}
\caption{A forbidden pattern in $G_{UDG}$}
\label{u6}
\end{figure}
\proof
Recall that every $UDG$ is an $LGG$ by Remark~\ref{uisl} and subsequently a graph in $G_{UDG}$ is also in $G_{LGG}$. The stated 
Lemma is proved by contradiction. We show a simple example of a graph that is forbidden in the class $G_{UDG}$ and can be a member of the class $G_{LGG}$.
Consider the graph shown in Figure~\ref{u6}($a$), we show that this graph cannot be a member of the class $G'_1$ type of $G_{UDG}$
(refer to Section~\ref{sc2} for the definition), i.e. it cannot be a $UDG$ with all points in convex position satisfying the acute angle property in $V$ (by definition of $G'_1$).
In the quadrilateral $v_1u_1u_3v_3$, by the definition of $G'_1$ type of $G_{UDG}$ and convexity, $\angle u_3v_3v_2 < \frac{\pi}{2}$. It can be observed by convexity that $\angle u_2v_1v_3 < \frac{\pi}{2}$.
By the property of isosceles triangles, $\angle v_3u_3u_1$ and $\angle u_1v_1v_3$ are acute. Therefore, in the quadrilateral $v_1u_1u_3v_3$, $\angle v_1u_1u_3$ is greater than $\frac{\pi}{2}$.
By convexity, $\angle v_1u_1u_2$ is greater than $\frac{\pi}{2}$. Therefore, $\overline{u_2v_1} > \overline{u_1v_1}$. Since $ \overline{u_2v_2}$ has unit length, $\overline{u_2v_1}$ has length more than unity.
The locus of the points equidistant from $v_1$ and $v_2$ is the perpendicular bisector to the line joining these points as shown in Figure~\ref{u6}($a$).
Observe that $\overline{u_4v_1} > \overline{u_4v_2}$. Thus, $v_1$ and $v_2$ both cannot have an edge incident to $u_4$.
Observe that this graph can be a member of the class $G_{LGG}$ when $U$ and $V$ are monotonic sequences in opposite quadrants.\\
Now we show that a symmetrically opposite forbidden pattern (across partitions $U$ and $V$ as shown in Figure~\ref{u6}($b$)) is also forbidden.
Recall that in order to obtain a graph in $G_{UDG}$, for every vertex $v_i \in V$ the edge incident to the vertex in $U$ with the highest
order is removed (refer to {\it left trimming} in Section~\ref{sc2}). Let $u_4$ be the vertex with the highest order with an edge incident to $v_3$. Thus, in the quadrilateral $u_1v_1v_3u_4, \angle u_1v_1v_3 > \frac{\pi}{2}$.
Therefore, both the vertices $u_1$ and $u_2$ cannot have edges incident to $v_4$ as shown in Figure~\ref{u6}($b$).
\qed
\begin{corollary}\label{forbidcor}
 The patterns shown in Figure~\ref{ffig} are forbidden in $G_{UDG}$. Note that the dotted edges indicate any possible forward path. 
\end{corollary}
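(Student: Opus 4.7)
The plan is to reduce Corollary~\ref{forbidcor} to Lemma~\ref{ul} by showing that every angle inequality used in the proof of that lemma remains valid when a direct edge in the pattern is inflated into an arbitrary forward path. First I would fix one of the configurations in Figure~\ref{ffig} and identify precisely which dotted subpaths have replaced which edges of Figure~\ref{u6}. Each such subpath is a forward path in $G'_1$ confined to a range $\{\langle u_a,u_b\rangle,\langle v_c,v_d\rangle\}$, so by the convex position of the underlying point set, every internal vertex of the subpath lies inside the convex hull of the four terminal vertices of the pattern.

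Next I would extract the two ingredients that actually drive the proof of Lemma~\ref{ul}: (i) if $(u_i,v_j)$ is an edge of $G'_1$ then $\angle u_i v_j v_{j\pm 1} < \frac{\pi}{2}$ by the defining acute-angle property, and (ii) by convexity this extends to $\angle u_i v_j v_k < \frac{\pi}{2}$ for every $v_k$ on the correct side of $v_j$ along the hull. Only these two facts, plus the unit-length condition at a specific pair of edges, appear in the quadrilateral argument of Figure~\ref{u6}. Crucially, that argument never inspects the ``interior'' of any diagonal edge; it only consults the edges incident to the terminal vertices of the pattern and then invokes convexity on the whole configuration.

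The main step is then to re-run the quadrilateral argument of Lemma~\ref{ul} verbatim, treating each dotted subpath as if it were its own chord. If a dotted forward path leaves a terminal $u_i$ via its first edge to some $v_{j'}$ in the range of the path, then the $G'_1$ acute-angle property at $v_{j'}$ combined with convexity transports the inequality to the actual terminal $v_j$ of the subpath; symmetrically for paths starting in $V$. Plugging these transported inequalities into the quadrilateral $v_1 u_1 u_3 v_3$, the derivation that $\angle v_1 u_1 u_2 > \frac{\pi}{2}$, hence $\overline{u_2 v_1} > \overline{u_1 v_1} = 1$, is unchanged, and the perpendicular-bisector conclusion that $v_1$ and $v_2$ cannot share a unit-distance neighbour $u_4$ follows as before. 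The symmetric pattern (Figure~\ref{u6}(b)) is handled by the same reduction, using \emph{left trimming} at the appropriate terminal exactly as in Lemma~\ref{ul}.

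The main obstacle I expect is bookkeeping: for each dotted subpath one must verify that the intermediate vertices lie on the side of the relevant diagonal on which convexity gives the acute inequality rather than its reverse. This reduces to checking that each internal vertex of the subpath falls inside the range of that subpath, which is immediate from the definition of a forward path in a $PRBG$; but with several dotted subpaths in a single pattern, the orientations have to be tracked carefully. Once that bookkeeping is done, the geometric contradiction of Lemma~\ref{ul} carries over unchanged and yields the corollary.
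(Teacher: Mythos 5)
Your proposal is correct and follows essentially the same route as the paper: the paper's own proof is a two-line remark that "the proof follows the same argument as in Lemma~\ref{ul}," deriving $\overline{u_1v_4} > \overline{u_2v_4}$ from the distance comparison at $v_2$ and concluding via the perpendicular-bisector observation. Your version merely spells out why replacing direct edges by forward paths is harmless — namely that the acute-angle and isosceles inequalities are applied only at the terminal vertices and are transported by convexity — which is exactly the implicit content of the paper's appeal to Lemma~\ref{ul}.
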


\begin{figure}[ht]
\begin{center}
 \includegraphics [scale=0.5]{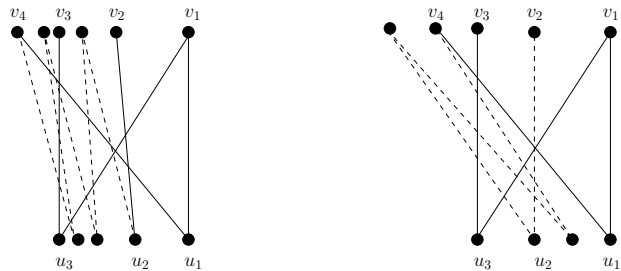}
 \caption{\small Forbidden patterns in $G_{UDG}$}
 \label{ffig}
\end{center}
\end{figure}
\proof
The proof follows the same argument as in Lemma~\ref{ul}. The distance between $u_1$ and $v_2$ is larger than the distance between $u_2$ and $v_2$.
Thus, $\overline{u_1v_4} > \overline{u_2v_4}$. It implies that the configurations shown in the Figure~\ref{ffig} are not feasible.
\qed
\begin{remark}
 For a graph $G = (U,V,E)$ in $G_{UDG}$, if two vertices $u \in U$ and $v \in V$ are spanned by some tree $T_l(u_0)$ such that $(u,v) \notin T_l(u_0)$,
 then $(u,v) \notin E$. 
\end{remark}
Now we present an improved bound for the maximum number of edges in $UDGs$ on convex independent point sets.
If there exists a vertex $v_0 \in V$ for a graph $G(U,V,E)$ in $G_{UDG}$ such that apart from all the vertices in $T_{l}(v_0)$, $\forall v \in V, v < v_0$
and $\forall u \in U, u < u_0$ where $u_0 \in U$ is a vertex in $T_l(v_0)$ with the least order. Assume that all the edges in this graph apart from the edges in 
$T_l(v_0)$ are crossing the edge $(u_0,v_0)$. Thus, by Lemma~\ref{pl} (crossing lemma) the number of these edges is bounded by $|U|+|V|$. Also assume
that all the vertices with order less than $v_0$ or $u_0$ have at least one edge incident to some vertex in $T_l(v_0)$. The number of edges
between these vertices (order less than $v_0$ or $u_0$) is linear by Corollary~\ref{plc}.
This type of $G_{UDG}$ is called a {\it modular $G_{UDG}$} or a {\it module}.
\begin{corollary}\label{linmod}
 The number of edges in a module $G=(U,V,E)$ is  bounded by $O(|U|+|V|)$.
\end{corollary}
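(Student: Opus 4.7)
The plan is to decompose the edges of a module into three layers determined by the ``separator'' edge $(u_0, v_0)$, and then dispatch each layer with tools already developed in Section~\ref{sc4}. Set $U_1 = T_l(v_0) \cap U$, $V_1 = T_l(v_0) \cap V$, $U_2 = U \setminus U_1$, $V_2 = V \setminus V_1$. By the module's ordering assumption, every vertex of $U_2$ has order less than $u_0$ and every vertex of $V_2$ has order less than $v_0$, so the line through $(u_0, v_0)$ is a valid separator $\ell$ in the sense required by Lemma~\ref{pl}, with $U_1, V_1$ on the left and $U_2, V_2$ on the right.

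The edges then fall into three groups. First, the tree edges of $T_l(v_0)$ contribute at most $|U_1| + |V_1| - 1$. Second, the crossing edges (those between $U_1$ and $V_2$, or between $V_1$ and $U_2$) are handled by invoking both parts of Lemma~\ref{pl}. Its hypotheses are met because $T_l(v_0)$ is a connected bipartite tree spanning $U_1 \cup V_1$, so every vertex of $U_1$ possesses a tree edge into $V_1$ and vice versa. Summing both directions yields at most $(|U_1|+|V_2|) + (|V_1|+|U_2|) = |U|+|V|$ crossing edges. Third, any edge between $U_2$ and $V_2$ would fail to cross the segment $(u_0, v_0)$, and is therefore ruled out directly by the module's hypothesis that every non-tree edge crosses $(u_0, v_0)$; alternatively, since every vertex of $U_2$ is assumed to have an edge into $V_1$, Corollary~\ref{plc} forces no two of them to share a neighbor in $V_2$, so the number of such edges is at most $|V_2|$.

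Combining the three bounds gives $|E| \le (|U_1|+|V_1|-1) + (|U|+|V|) + |V_2| = O(|U|+|V|)$, which is the claim. The only non-mechanical step is the second: confirming that the tree structure simultaneously supplies the ``anchor'' edges required to invoke both clauses of the crossing lemma. Once that observation is in place, the rest is bookkeeping and Corollary~\ref{linmod} follows without any further geometric reasoning.
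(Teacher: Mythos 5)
Your proof is correct and follows essentially the same route as the paper, which bounds a module's edges by combining the tree edges of $T_l(v_0)$, the crossing edges via Lemma~\ref{pl}, and the edges among the low-order vertices via Corollary~\ref{plc}. Your write-up is in fact more explicit than the paper's, which leaves this argument embedded in the definition of a module rather than in a separate proof.
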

The subgraph of a module induced by the vertices and edges in $T_l(v_0)$ is called the {\it core} of a module and the remaining edges
are called the {\it auxiliary edges}. The vertices to which the auxiliary edges are incident (not in the core) are called {\it auxiliary vertices}.
A high level of our approach is to show that any graph in the class $G_{UDG}$ can be decomposed into interconnected modules. 
Let us describe two kinds of orientations for a given pair of disjoint modules. In the first orientation the modules are linearly separable. In such
\begin{figure}[ht]
\begin{center}
 \includegraphics [scale=0.5]{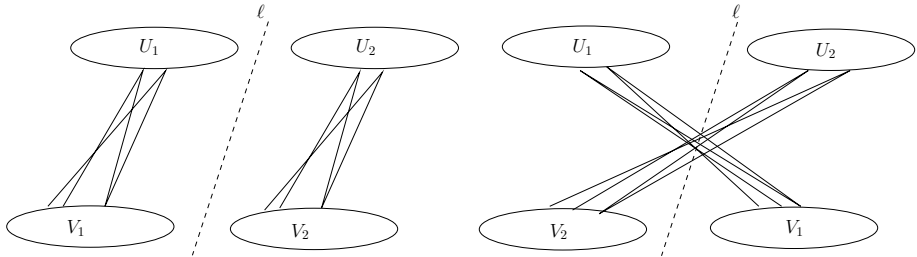}
 \caption{\small Linearly separable modules and cross separable modules}
 \label{u11}
\end{center}
\end{figure}
a case, there exists a separator line such that all the vertices of both the modules lie on the opposite sides of this line, i.e. two modules
$G_1 = (U_1,V_1,E_1)$ and $G_2 = (U_2,V_2,E_2)$ are linearly separable if $\forall u_i \in U_1$ (resp. $\forall v_i \in V_1$) and
$\forall u_j \in U_2$ (resp. $\forall v_j \in V_2$) either $u_i > u_j$ and $v_i > v_j$ or $u_i < u_j$ and $v_i < v_j$.
Further, $G_1$ and $G_2$ are said to be partially linearly separable if $\forall u_i \in U'_1$ (resp. $\forall v_i \in V'_1$) and
$\forall u_j \in U_2$ (resp. $\forall v_j \in V_2$) either $u_i > u_j$ and $v_i > v_j$ or $u_i < u_j$ and $v_i < v_j$ where
$U'_1$ and $V'_1$ are the sets corresponding to the core vertices in $G_1$ .
On the contrary, two modules $G_1 = (U_1,V_1,E_1)$ and $G_2 = (U_2,V_2,E_2)$ are cross separable if $\forall u_i \in U_1$ (resp. $\forall v_i \in V_1$) and
$\forall u_j \in U_2$ (resp. $\forall v_j \in V_2$) either $u_i > u_j$ and $v_i < v_j$ or $u_i < u_j$ and $v_i > v_j$.

Now we introduce a procedure called {\it partitioning}. If a module is partitioned along a line $\ell$, then the module is separated into 
smaller units such that for any of the resultant module either all the vertices lie on one side of $\ell$ or the vertices in each partition (opposite
partitions of bipartite graphs) lie on the opposite sides of $\ell$.

\begin{lemma}\label{graphpart}
 A module can be partitioned along any partition line.
\end{lemma}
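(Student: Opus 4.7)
The plan is to exploit the tree structure of the core $T_l(v_0)$ together with the forward-path property of $PRBG$s, and argue case by case that the pieces obtained after cutting along $\ell$ are each either entirely on one side of $\ell$ or cross-separated in the sense that their $U$-vertices and $V$-vertices lie on opposite sides of $\ell$.

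First I would examine how $\ell$ interacts with the core. Because every path of $T_l(v_0)$ is a forward path moving leftward from $v_0$, any such path crosses $\ell$ at most once and, when it does, transitions from the right of $\ell$ to the left. Consequently $\ell$ cuts a set of edges of $T_l(v_0)$ and breaks the core into a forest: one subtree on the $v_0$-side of $\ell$ together with one or more subtrees on the opposite side, each of which inherits the tree-of-forward-paths structure and can play the role of the core of a sub-module.

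Next I would distribute the auxiliary vertices. By the definition of a module, every auxiliary vertex $w$ lies strictly to the right of both $u_0$ and $v_0$, and every auxiliary edge crosses $(u_0,v_0)$. Depending on the position of $w$ relative to $\ell$, its incident edges either remain entirely on one side of $\ell$, in which case I attach $w$ to the corresponding core subtree to form an on-side piece, or they cross $\ell$, in which case I collect them into a cross-piece whose $U$- and $V$-endpoints lie on opposite sides of $\ell$ by construction. Corollary~\ref{plc} bounds how many core neighbors $w$ can have on the opposite side of $\ell$, so these cross-pieces remain well behaved.

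The main obstacle I expect is to confirm that the on-side pieces are themselves bona fide modules, i.e.\ that each has its own $v_0$-$u_0$ apex, every non-core vertex attaches to the new core, and every non-core edge crosses the new boundary edge. For this I would designate the rightmost $V$-vertex inside each subtree as its new $v_0$ and the corresponding leftmost $U$-vertex as its new $u_0$, then invoke the remark stated just above the lemma (that any edge whose endpoints are spanned by a $T_l$-tree must itself lie in that tree), together with Corollary~\ref{forbidcor}, to rule out stray edges that would violate the module definition. The cross-pieces then require no further argument, since they satisfy the cross-separation condition by construction, completing the partitioning.
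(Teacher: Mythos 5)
Your proof follows essentially the same route as the paper's: cut the core $T_l(v_0)$ along $\ell$, using the fact that its forward paths are monotone (pairwise cross separable) so that each contributes at most one contiguous crossing block, and then group the on-side subtrees and the crossing edges into sub-modules. You actually supply more detail than the paper does --- its proof is a short sketch that never explicitly treats the auxiliary vertices or verifies the module axioms for the resulting pieces --- but the underlying decomposition is identical.
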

\begin{figure}[ht]
\begin{center}
 \includegraphics [scale=0.4]{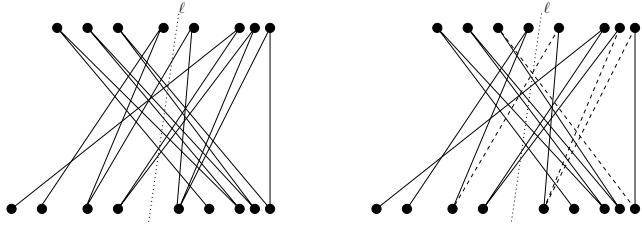}
 \caption{\small Partitioning a module}
 \label{udgfpart}
\end{center}
\end{figure}
\proof
Observe the forwards paths in a left tree ($T_l(v)$ for some vertex $v$). Note that all the forwards path in a left tree are cross separable. Let $\ell$ be the partition line.
The forward paths placed to the opposite side of $\ell$ can be abstracted as two separate linearly separable modules. Consider the edges crossing $\ell$.
The edges in a forward path intersecting $\ell$ form another module such that the vertices in both the bipartioins are placed on the opposite side of $\ell$.
Note that some of the edges in the original module now have their both constituting vertices in the different modules.
For an example, refer to the Figure~\ref{udgfpart}$(a)$ for the core of a module and a partition line $\ell$. The resultant modules after partitioning
are shown in Figure~\ref{udgfpart}$(b)$. The dotted edges are the edges between the vertices of different modules.
\qed

Consider the case when in a graph in $G_{UDG}$, there is a pair of overlapping modules, i.e. they are neither (partially) linearly separable nor cross separable. Such modules
can be partitioned down further to ensure that any pair of modules is either (partially) linearly separated or cross separated.

Consider two cross separated modules as shown in Figure~\ref{u11}. Note that there exist any edges between $U_1$ and $V_2$, then these
modules are partitioned further to ensure that no such edges exist or a big module can be abstracted from these vertices.
Edges can exist between $U_2$ and $V_1$ though. Each vertex in $U_2$ or $V_1$ can have at most one such
edge incident to it (refer to Lemma~\ref{fusion}). The union of two cross separable modules with such connecting edges is called a fused module and 
the abstracting a fused module from two basic modules is called {\it fusing}. The edges across two modules comprising of a fused module
are called {\it fusing edges}. These terms are also used to combine two (partially) linearly separable modules in a hybrid fused module.
Let us consider two cross separable modules $G_1=(U_1,V_1,E_1)$ and $G_2=(U_2,V_2,E_2)$ such that $\forall u_i \in U_1 > \forall u_j \in U_2$
and $\forall v_i \in V_1 < \forall v_j \in V_2$. Let us consider the possible 
adjacencies between $V_1$ and $U_2$. We argue that the set of such edges form a matching, i.e. no vertex has more than one edges incident to it.
\begin{lemma}\label{corefusion}
 In a module $G = (U,V,E)$, two vertices $u_1$ and $u_2$ (resp. $v_1$ and $v_2$) cannot have an edge incident to $v_0 < \forall v_i \in V$ 
 (resp. $u_0 < \forall u_i \in U$) unless both $u_1$ and $u_2$ (resp. $v_1$ and $v_2$) are the core vertices in $G$.
\end{lemma}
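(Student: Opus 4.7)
My plan is to reduce the claim to a direct structural consequence of the module definition, rather than invoking Corollary~\ref{forbidcor} or any geometric property of unit distances. To avoid a name clash with the module's distinguished root vertex, I will denote the root of the module's core tree by $v^{\mathrm{rt}}$ and the least-order core vertex in $U$ by $u^{\mathrm{rt}}$, and reserve $v_0$ for the lemma's minimum vertex of $V$. Recall from the module definition that (i) every auxiliary $V$-vertex has order strictly less than $v^{\mathrm{rt}}$ and every auxiliary $U$-vertex has order strictly less than $u^{\mathrm{rt}}$, and (ii) every non-core edge crosses the core edge $(u^{\mathrm{rt}}, v^{\mathrm{rt}})$.

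First I would split into two cases according to the position of $v_0$. If $v_0 < v^{\mathrm{rt}}$, then $v_0$ is an auxiliary vertex, and hence every edge incident to $v_0$ is a non-core edge. By the crossing condition, any such edge $(u, v_0)$ must have its $U$-endpoint $u$ strictly greater in order than $u^{\mathrm{rt}}$; but auxiliary $U$-vertices all lie below $u^{\mathrm{rt}}$, so $u$ must belong to the core tree $T_l(v^{\mathrm{rt}})$. This already forces every $u$ adjacent to $v_0$, and in particular $u_1$ and $u_2$, to be a core vertex. If instead $v_0 = v^{\mathrm{rt}}$, then there are no auxiliary $V$-vertices at all. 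Edges from $v_0$ to core $U$-vertices are themselves core edges; an edge from $v_0$ to some auxiliary $u^{*} < u^{\mathrm{rt}}$ would be a non-core edge incident to $v^{\mathrm{rt}}$, so it would share an endpoint with the central edge $(u^{\mathrm{rt}}, v^{\mathrm{rt}})$ rather than properly cross it. This would contradict the module's crossing assumption, so no such edge exists. Again, every $u$ adjacent to $v_0$ must be a core vertex.

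The symmetric statement about $v_1, v_2 \in V$ adjacent to the minimum vertex of $U$ follows by interchanging the roles of $U$ and $V$ throughout. The only delicate point in this argument is the reading of the word ``crossing'' in the case $v_0 = v^{\mathrm{rt}}$: the module definition must be interpreted strictly, so that two edges that merely share an endpoint do not count as crossing; otherwise the lemma would fail for edges from $v^{\mathrm{rt}}$ to auxiliary $U$-vertices. Once this convention is fixed, the lemma is a purely combinatorial consequence of the defining crossing condition of a module, and no appeal to the geometric forbidden patterns of Corollary~\ref{forbidcor} or to the convex UDG structure is needed.
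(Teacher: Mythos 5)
Your proof addresses a different statement from the one the lemma makes, because of a misreading of $v_0$. In Lemma~\ref{corefusion} the condition $v_0 < \forall v_i \in V$ forces $v_0 \notin V$: it is a vertex \emph{outside} the module, of lower order than every vertex of the module's $V$-partition (this is exactly the setting of fusing edges between cross separable modules, which is why the paper immediately restates the lemma as Lemma~\ref{fusion}). You instead take $v_0$ to be the minimum vertex \emph{of} $V$ (or the core root $v^{\mathrm{rt}}$ itself) and then argue about which vertices of $U$ inside the module can be adjacent to it. That is not the claim, and your conclusion (``every $u$ adjacent to $v_0$ is a core vertex'') neither matches nor implies the lemma: the lemma explicitly permits a single auxiliary vertex to have an edge to the external $v_0$ (these are precisely the one-to-one fusing adjacencies of Lemma~\ref{fusion}); what it forbids is \emph{two} module vertices sharing the external neighbor $v_0$ unless both are core vertices.

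There is a second, independent problem: you assert that the lemma is ``a purely combinatorial consequence of the defining crossing condition of a module'' and that no appeal to the geometric forbidden patterns is needed. The correctly read lemma has three cases, and only one of them is combinatorial. Two auxiliary vertices sharing the external neighbor is excluded by Corollary~\ref{plc}; two core vertices sharing it is allowed (the ``unless'' clause); but the remaining case --- one core vertex $u_1$ and one auxiliary vertex $u_2$ both adjacent to the external $v$ --- is exactly where the paper must work, and its argument is genuinely geometric: it builds the forbidden configuration of Lemma~\ref{ul}, derives an obtuse angle in the quadrilateral $v_2u_{2'}u_2v$, and concludes that $\overline{vu_1}$ exceeds unit length, so the edge $(u_1,v)$ cannot exist in a unit distance graph. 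The module definition alone does not rule this case out, since such an edge to an external vertex is not constrained by the internal crossing condition of the module. So the proposal has a genuine gap on both counts: it proves the wrong statement, and the route it advocates cannot reach the right one.
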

The statement of Lemma~\ref{corefusion} can also be interpreted as Lemma~\ref{fusion}.
\begin{lemma}\label{fusion}
 For cross separable modules $G_1=(U_1,V_1,E_1)$ and $G_2=(U_2,V_2,E_2)$ such that $\forall u_i \in U_1 > \forall u_j \in U_2$ and 
 $\forall v_i \in V_1 < \forall v_j \in V_2$, there can be only one-to-one adjacencies between $V_1$ and $U_2$ unless two or more core vertices
 have edges incident to a common vertex.
\end{lemma}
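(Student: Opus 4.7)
The plan is to obtain Lemma~\ref{fusion} as a direct reinterpretation of Lemma~\ref{corefusion}, using the cross separability hypothesis to place the vertices of one module into exactly the extremal role required by Lemma~\ref{corefusion} inside the other.

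First I would pin down the relative order imposed by cross separability. Since $\forall u_i \in U_1 > \forall u_j \in U_2$ and $\forall v_i \in V_1 < \forall v_j \in V_2$, every vertex $v \in V_1$ has strictly smaller order than every vertex of $V_2$, and every vertex $u \in U_2$ has strictly smaller order than every vertex of $U_1$. In other words, each $v \in V_1$ sits to the right of the whole $V$-partition of $G_2$, and each $u \in U_2$ sits to the right of the whole $U$-partition of $G_1$.

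Next I would apply Lemma~\ref{corefusion} twice, once on each side. Suppose some $v \in V_1$ is adjacent to two distinct vertices $u, u' \in U_2$. Viewed from inside $G_2$, the vertex $v$ is exactly a ``$v_0$'' in the sense of Lemma~\ref{corefusion}, i.e.\ a vertex of smaller order than every element of $V_2$. Lemma~\ref{corefusion} then forces $u$ and $u'$ both to be core vertices of $G_2$. By the symmetric argument, if some $u \in U_2$ is adjacent to two distinct vertices $v, v' \in V_1$, then $u$ plays the role of ``$u_0$'' for $G_1$ and Lemma~\ref{corefusion} forces $v$ and $v'$ both to be core vertices of $G_1$. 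Outside of these two exceptions, every vertex of $V_1 \cup U_2$ sees at most one adjacency across the partition, which is precisely the claimed matching property.

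The main obstacle will be justifying that Lemma~\ref{corefusion} really applies across modules, since its statement is for a single module $G=(U,V,E)$. What has to be checked is that the proof of Lemma~\ref{corefusion} uses only the global forbidden patterns of Corollary~\ref{forbidcor} together with the path-restricted property (Lemma~\ref{lggpbg}), both of which hold in the ambient graph of $G_{UDG}$ regardless of which module a particular vertex is said to belong to; the ``extremal'' vertex $v_0$ (resp.\ $u_0$) needs only to lie to the right of the relevant partition of the target module, which is exactly what cross separability guarantees. With that observation, both halves reduce to Lemma~\ref{corefusion} and combine to yield the stated one-to-one adjacency between $V_1$ and $U_2$.
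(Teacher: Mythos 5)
Your reduction of Lemma~\ref{fusion} to Lemma~\ref{corefusion} is formally valid, and it is exactly the relationship the paper itself asserts (``The statement of Lemma~\ref{corefusion} can also be interpreted as Lemma~\ref{fusion}''): cross separability makes each $v \in V_1$ an admissible ``$v_0$'' for $G_2$ and each $u \in U_2$ an admissible ``$u_0$'' for $G_1$, and the two applications of Lemma~\ref{corefusion} together give the one-to-one adjacency up to the stated core-vertex exception. Your side remark about why Lemma~\ref{corefusion} should apply across module boundaries is the right thing to worry about, and your resolution (the argument only needs an external vertex of lower order than the relevant partition, which cross separability supplies) is correct.

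The difficulty is that in the paper's logical structure Lemma~\ref{corefusion} is never proved independently: the proof that follows the statement of Lemma~\ref{fusion} is where the content of Lemma~\ref{corefusion} is actually established. The two lemmas are presented as restatements of one another, so deriving one from the other supplies no new argument; relative to the paper, your proposal is circular and carries essentially no mathematical content of its own. What you would still need to supply is precisely what that proof contains: (i) a core vertex $u_1$ and an auxiliary vertex $u_2$ of the same module cannot both be adjacent to a vertex $v$ with $v < v_i$ for all $v_i \in V_1$ --- this is the genuinely geometric step, which uses the left-tree structure of the core and the path restricted property to pin down the relative order of the intermediate vertices, and then the unit-distance forbidden-pattern argument of Lemma~\ref{ul} (the angle $\angle v u_{2'} u_2 > \frac{\pi}{2}$ forcing $\overline{v u_1}$ to exceed unit length), together with its symmetric counterpart using the edge removed by left trimming; and (ii) two auxiliary vertices cannot share a neighbor outside the module, which follows from Corollary~\ref{plc}. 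Only the case of two core vertices sharing a neighbor survives, which is the ``unless'' clause. Without (i) and (ii), or an independent proof of Lemma~\ref{corefusion}, the proposal does not establish the lemma.
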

\begin{figure}[ht]
\begin{center} 
 \includegraphics [scale=0.5]{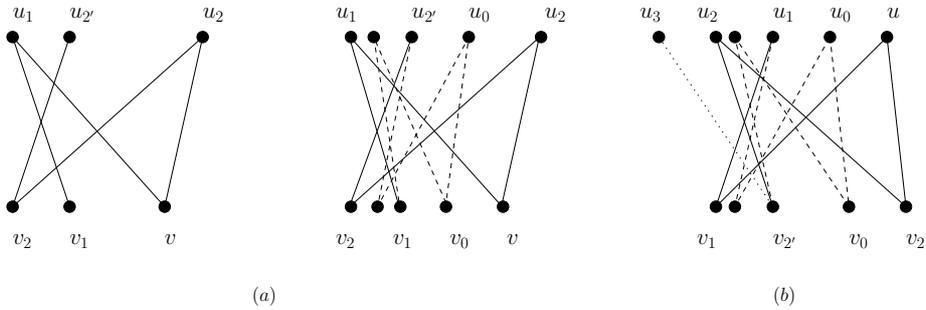}
 \caption{\small Edges between two cross separable modules}
 \label{u12}
\end{center}
\end{figure}
\proof
First we prove that in a module $G_1=(U_1,V_1,E_1)$ two vertices $u_1 \in U_1$ (a core vertex) and $u_2 \in U_1$ (an auxiliary vertex) cannot have an edge incident to $v$ such that $v < v_i 
\quad \forall v_i \in V_1$. Let us prove it by contradiction. Let us assume that $u_1$ and $u_2$ have an edge incident to $v$ as 
shown in Figures~\ref{u12}($a$).
Let $v_2$ be the core vertex with an auxiliary edge incident to $u_2$. Since $v_2$ is a core vertex, it also has an edge incident to at least one
core vertex $u_{2'} > u_2$. By path restricted property, $u_1 > u_{2'} > u_2$. Since $u_1$ is a core vertex, it has an edge incident to a core 
vertex $v_1$. Again by path restricted property, $v_2 > v_1 > v$. Since $v_1$ and $v_2$ are the core vertices in the same module, there exists
another path between them. Let us assume w.l.o.g. that this path passes through $u_{2'}$.
Note that two core vertices are always connected by a left tree. Thus, this tree provides a path between
$v_1$ and $v_2$ (subsequently $u_{2'}$). Let $v_{1'}$ be the immediate neighbor of $u_{2'}$. If $v_1 = v_{1'}$ or $v_1 > v_{1'} > v$,
then both $u_2$ and $u_2'$ cannot have an edge incident to $v_2$ (refer to Lemma~\ref{ul}). Similarly, if $v_1$ has an edge incident to
$u_{2'}$ or a vertex between $u_{2'}$ and $u_2$, then $v$ and $v_1$ both cannot have an edge incident to $u_1$ (refer to Lemma~\ref{ul}).
Thus, $u_{2'} < u_{1'} < u_{1}$ and $v_1 < v_{1'} < v_2$. Therefore, there exist vertices $u_0$ and $v_0$ such that there exist forward paths
with ranges $\{\langle v_0,v_1 \rangle,\langle u_0,u_{1'}\rangle \}$ and $\{\langle v_0,v_{1'}\rangle,\langle u_0,u_{2'}\rangle\}$ respectively. Now we show that this configuration is not feasible.
In the quadrilateral $v_2u_{2'}u_2v$, $\angle vu_{2'}u_2 > \frac{\pi}{2}$. Also note that the distance between $v_0$ and $u_1$
is greater than unity. Thus, $v$ cannot have an edge incident to $u_1$ since $\overline{v_0u_1} < \overline{vu_1}$.\\
Now we argue the same claim for symmetrically opposite case where two vertices $v_1 \in V_1$ (a core vertex) and $v_2 \in V_1$ 
(an auxiliary vertex) cannot have an edge incident to $u$ such that $u < u_i \quad \forall u_i \in U_1$ as shown in Figures~\ref{u12}($b$).
Let $u_3$ be the vertex with the highest order incident to $v_{2'}$ before {\it left trimming} (refer to Section~\ref{sc2}). 
Thus, in the quadrilateral $uv_2v_{2'}u_3, \angle uv_2v_{2'} > \frac{\pi}{2}$. Thus, similarly $u$ cannot have an edge incident to $v_1$.\\
By Corollary~\ref{plc} no two auxiliary vertices can have an edge incident to the same vertex outside a module.
Thus, amongst the edges incident between $G_1$ and $G_2$, any vertex in either module has at most one edge incident to it.
\qed

Observe the following Corollary from Lemma~\ref{corefusion}.
\begin{corollary}\label{distcol}
 If all the edges but $(u_2,v)$ (resp. $(u,v_2)$) as shown in Figure~\ref{u12} exist in a $G'_1$ type of $G_{UDG}$, then the distance
 between $u_2$ and $v$ (resp. $u$ and $v_2$) is less than unity, i.e. less than the distance between two vertices inducing an edge.
\end{corollary}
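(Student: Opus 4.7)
The plan is to reuse, almost verbatim, the geometric machinery already built up in the proof of Lemma~\ref{corefusion}, and to extract from the same chain of angular inequalities a strict distance bound rather than merely a nonexistence statement. In the lemma, the configuration of Figure~\ref{u12}(a) was shown to be infeasible under the assumption that every drawn edge (including $(u_2,v)$) is a UDG edge, the contradiction being that $\overline{u_1 v}$ would have to exceed unity. In the corollary, the hypothesis is weaker: all of $(u_1,v), (u_1,v_1), (u_2,v_2), (u_{2'},v_2)$ and the tree path from $v_1$ to $v_2$ are present as unit-length edges, and only $(u_2,v)$ is missing. The goal is the quantitative conclusion $\overline{u_2 v} < 1$, not merely $\overline{u_2 v} \neq 1$.

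The first step I would carry out is to replay the angular bookkeeping of Lemma~\ref{corefusion} in this weaker regime. Convexity of the point set and the $G'_1$-defining acute-angle property at each vertex of $V$ still force $\angle u_3 v_3 v_2 < \pi/2$ and the companion acute angles, and the LGG constraint at $v_2$ forces $\angle u_{2'} v_2 u_2 < \pi/2$; combined with the sum-to-$2\pi$ identity in the convex quadrilateral $v_2 u_{2'} u_2 v$, this yields the same obtuse angle $\angle v u_{2'} u_2 > \pi/2$ that was used in the lemma. None of these statements depended on $(u_2,v)$ being an edge, only on the positions of the points and on the other unit edges; so they carry over without change.

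The second step is to convert the angular inequalities into the desired distance comparison. Since $(u_1,v)$ is a unit edge, $\overline{u_1 v} = 1$, and it suffices to show $\overline{u_2 v} < \overline{u_1 v}$. The lemma's proof already showed that along the forward path built from $v_0$ there is a vertex with $\overline{v_0 u_1} > 1$ and $\overline{v_0 u_1} < \overline{v u_1}$; the same two inequalities, read in reverse, pin $v$ strictly inside the unit disk about $u_2$ once the obtuse angle at $u_{2'}$ and the unit edges $(u_2,v_2), (u_{2'},v_2)$ are taken into account. A short trigonometric comparison in the convex quadrilateral $v_2 u_{2'} u_2 v$ then gives $\overline{u_2 v} < 1$. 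The symmetric statement about $\overline{u v_2}$ in Figure~\ref{u12}(b) is handled by the mirrored argument used at the end of the lemma's proof, with the pre-\emph{left-trimming} vertex $u_3$ playing the role of $u_{2'}$ and the roles of $U$ and $V$ exchanged.

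The main obstacle I expect is purely bookkeeping: the configuration carries many labelled auxiliary vertices ($u_0, v_0, u_{1'}, v_{1'}, u_{2'}, u_3$), and one must keep precise track of which angle is acute by the $G'_1$ acute-angle property in $V$, which by the LGG constraint at a common endpoint, which by convexity of a specific quadrilateral, and which by the sum-to-$2\pi$ identity. No new geometric idea beyond those already appearing in Lemma~\ref{corefusion} is required; the corollary is essentially a quantitative restatement of its contradiction, asserting that the would-be unit edge $(u_2,v)$ fails because the endpoints are too close, not too far apart.
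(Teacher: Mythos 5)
Your overall strategy---treat the corollary as a quantitative re-reading of the chain of inequalities inside the proof of Lemma~\ref{corefusion}---is exactly what the paper intends; the paper in fact offers no independent proof at all, introducing the statement only with ``Observe the following Corollary from Lemma~\ref{corefusion}.'' So the route is the right one. The problem is that your execution leaves the one step that actually needs work unproved. The lemma's proof, read literally, assumes $(u_2,v)$ \emph{is} a unit edge and concludes that a \emph{different} distance, $\overline{vu_1}$, exceeds unity (via $\overline{vu_1} > \overline{v_0u_1} > 1$); that is a lower bound on a distance incident to $u_1$. The corollary needs an \emph{upper} bound on $\overline{u_2v}$ under the complementary hypothesis that $(u_1,v)$ is the unit edge. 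These are not ``the same two inequalities, read in reverse'': to get $\overline{u_2v}<1$ you must compare $\overline{u_2v}$ against some unit edge incident to $u_2$ or to $v$ (e.g.\ show $\overline{u_2v}<\overline{u_1v}=1$ via an obtuse angle at $u_2$ in the triangle $vu_2u_1$, or $\overline{u_2v}<\overline{u_2v_2}=1$ via an obtuse angle at $v$), and you never identify which angle, in which triangle, is obtuse for which reason. ``A short trigonometric comparison \ldots then gives $\overline{u_2v}<1$'' is an assertion of the conclusion, not a derivation.

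A second, related concern is your claim that none of the angular bookkeeping depends on $(u_2,v)$ being an edge. The obtuse angle in the quadrilateral $v_2u_{2'}u_2v$ is obtained, in the style of Lemma~\ref{ul}, by forcing the remaining angles of that quadrilateral to be acute; the angles adjacent to the side $\overline{u_2v}$ draw their acuteness from the LGG constraint (or the $G'_1$ acute-angle property) applied to the edge $(u_2,v)$ --- precisely the edge the corollary removes. You need to recheck which acuteness claims survive when $(u_2,v)$ is only a segment and not an edge, and supply the missing comparison explicitly; otherwise the argument establishes only that $(u_2,v)$ cannot be a unit edge (which is Lemma~\ref{fusion}), not the strict inequality $\overline{u_2v}<1$ that the later lemmas (e.g.\ the chains of distance comparisons in Lemma~\ref{fuseset}) actually rely on.
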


\begin{lemma}\label{linpart}
Let $G_1=(U_1,V_1,E_1)$ and $G_2=(U_2,V_2,E_2)$ be two (partially) linearly separable modules such that the core vertices of $G_2$ have higher order
than the same in $G_1$. Only one vertex in $G_1$ can have edges incident to the core vertices of $G_2$ or the same vertex has at most one edge incident to an auxiliary
vertex in $G_2$. 
\end{lemma}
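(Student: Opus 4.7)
The plan is to mimic the structure of Lemma~\ref{fusion}, but replace the cross-separable geometry with the linearly separated one. Suppose, for contradiction, that two distinct vertices $x_1, x_2 \in U_1$ (the case of two vertices in $V_1$ is symmetric) both carry edges to core vertices $c_1, c_2 \in V_2$ of the core of $G_2$. Fix notation so that the core of $G_2$ is the left-tree $T_l(v_0)$ with anchor edge $(u_0, v_0)$, where $u_0$ has the least order among its core vertices; by the definition of a module, every non-core edge of $G_2$ crosses $(u_0, v_0)$, and by (partial) linear separability the entire core of $G_2$ lies strictly to the left (higher order) of $G_1$, while auxiliary vertices of $G_2$ may interleave with $G_1$.

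First I would place a separator line $\ell$ between $G_1$ and the core of $G_2$ and apply Lemma~\ref{pl} together with Corollary~\ref{plc}. Since each core vertex of $G_2$ has at least one edge inside the core (to the root side), the crossing lemma applies and Corollary~\ref{plc} immediately rules out $c_1 = c_2$: two distinct right-of-$\ell$ vertices $x_1, x_2$ with edges across $\ell$ cannot share a left-of-$\ell$ neighbour. So $c_1 \neq c_2$, say $c_1 < c_2$ in the order on $V_2$.

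Second (the core of the argument), I would build a forbidden quadrilateral. Because $c_1, c_2$ both lie in $T_l(v_0)$, there is a forward path through the core connecting them, which can be contracted to a single forward path with range $\{\langle c_1, c_2\rangle, \langle u', u''\rangle\}$ for some core $u$-vertices. Appending the edges $(x_1, c_1)$ on one end and $(x_2, c_2)$ on the other produces a configuration on four key vertices $x_1, x_2, c_1, c_2$ together with the intermediate core edges. I would then argue that this configuration realises (after the appropriate relabelling) one of the forbidden patterns in Corollary~\ref{forbidcor}: the two direct crossings $(x_i, c_i)$ play the role of the ``outer'' pair, while the core path plays the role of the interior dotted forward path. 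The distance argument of Corollary~\ref{distcol} then shows that one of the required unit-distance edges has length strictly greater than one, the desired contradiction.

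Third, for the auxiliary clause, let $x^*$ be the unique vertex of $G_1$ that does reach the core, and assume $x^*$ has two edges to auxiliary vertices $y_1, y_2$ of $G_2$. Each $y_i$ has, by definition of auxiliary vertex, an auxiliary edge to some core vertex $c_i' \in G_2$. Chaining $x^* \to y_i \to c_i'$ inside the graph gives a forward path reaching the core through each $y_i$, and combining these with the direct edge from $x^*$ to the core produces the same kind of quadrilateral as in the previous step; the forbidden-pattern/Corollary~\ref{distcol} argument again applies.

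The main obstacle I expect is the configuration-matching in step two: the forbidden patterns of Corollary~\ref{forbidcor} are stated for a specific four-vertex arrangement, and in our setting one must first contract the core left-tree path between $c_1$ and $c_2$ to a single ``effective'' edge and verify convexity of the resulting quadrilateral $x_1 c_1 c_2 x_2$. The partial rather than full linear separability makes this slightly delicate, because auxiliary vertices of $G_2$ may geometrically lie between $G_1$ and the core; however, since only core vertices are involved in the forbidden-pattern inequality, the order of these auxiliary intruders does not affect the relevant angles, and the argument still goes through. Once this reduction is performed, applying Corollary~\ref{forbidcor} and Corollary~\ref{distcol} to close the contradiction is routine.
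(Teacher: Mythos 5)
There is a genuine gap at the heart of your argument, in step two. The forbidden patterns of Corollary~\ref{forbidcor} (and the distance inequality of Corollary~\ref{distcol} that underlies them) all conclude that two vertices cannot both have edges to a \emph{common} vertex: the chain of inequalities ends with $\overline{u_1v_4} > \overline{u_2v_4}$, which is only a contradiction because both segments are supposed to be unit edges to the \emph{same} endpoint $v_4$. Your configuration $x_1c_1$, $x_2c_2$ with $c_1 \neq c_2$ is a matching-type configuration with distinct endpoints on both sides, and no amount of contracting the core path between $c_1$ and $c_2$ turns it into one of those patterns; you would need a closed chain of strict distance inequalities between two quantities that are each equal to one, and you have not exhibited the intermediate unit edges that would produce it. Indeed such matchings \emph{are} permitted between cross-separable modules (that is exactly the content of Lemma~\ref{fusion} and the notion of fusing edges), so any correct proof that they are forbidden between \emph{linearly} separable modules must use the linear separability in an essential, order-theoretic way --- and your quadrilateral construction never does. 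The paper's proof works quite differently: it first uses the path-restricted property together with the ordering forced by linear separability to show that only the single highest-order core vertex of $G_1$ can have any edge into $G_2$ at all (any other choice creates a back edge to a forward path inside $G_1$ extended toward $G_2$), observes that the existence of such an edge forces the core of $G_1$ to be star-shaped, and only then invokes Lemma~\ref{corefusion} and Lemma~\ref{ul} to limit that one vertex (and its partner in the other partition) to a single edge. Your proposal is missing this entire first, order-based reduction, which is where the ``only one vertex'' conclusion actually comes from.

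A secondary issue: your step one misapplies Corollary~\ref{plc}. That corollary forbids two vertices lying to the right of the separator, each with an edge crossing to the left, from sharing a common neighbour \emph{to the right} of the separator; you are using it to forbid a shared neighbour \emph{to the left} (a common $c_1 = c_2$ in the core of $G_2$), which is not what it states. The case $c_1 = c_2$ is better handled by the mirror of Lemma~\ref{corefusion}, but even granting it, the main gap above remains.
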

\proof
Let us consider the edges between two linearly separable modules. Observe that no auxiliary vertex in $G_1$ has an edge incident to a vertex in $G_2$.
Only one core vertices in $G_1$ with the highest order (from either partition) can have an edge incident to a vertex in $G_2$. Note that if
such an edge exists, then $G_1$ has a star shaped core, i.e. one of the bipartition has only one vertex (say $u_0 \in U_1$) with edges incident to all vertices
of the module in $V_1$. Otherwise, $G_1$ and $G_2$ will not be linearly separable. Let us consider all the edges incident to
$u_0$ with the vertices in $G_2$. If there is only one such edge, then the statement of the lemma is proved. Otherwise, if $u_0$ has an edge
incident to at least one auxiliary vertex, then $u_0$ can have only edge incident to the vertices in $G_2$ by Lemma~\ref{corefusion}.
Let us consider the situation when a vertex from the other bipartition (say $v_0 \in V_1$) has an edge incident to $G_2$. Note that $(u_0,v_0) \in E_1$. Both $u_0$ and $v_0$ cannot have
edges incident to the vertices of a forward path in $G_2$ by Corollary~\ref{lggpbg}. Let $u_0$ has an edge incident to $v_1$ and $v_0$ has an edge incident
to $u_1$. Thus, $u_1$ and $v_1$ are the vertices in the different branches of the core of $G_2$ as shown in Figure~\ref{linpartfig}. This configuration
is not feasible by Lemma~\ref{ul}.
\begin{figure}[ht]
\begin{center} 
 \includegraphics [scale=0.5]{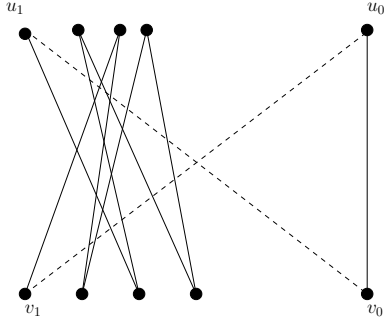}
 \caption{\small Edges between two cross separable modules}
 \label{linpartfig}
\end{center}
\end{figure}

Consider the case when $G_1$ and $G_2$ are partially linearly separable. Note that no vertex in $G_1$ can have an edge incident to a vertex
in $G_2$ with less order than any vertex of $G_1$ by path restricted property. Thus, it proves that linearly separable modules can have
at most one edge incident between them.
\qed

We can extend Lemma~\ref{linpart} to the following lemma.
\begin{lemma}
Two linearly separated fused modules $G_1 = (U_1,V_1,E_1)$ and $G_2 = (U_2,V_2,E_2)$ such that $\forall u_i \in U_1 < \forall u_j \in U_2$
(resp. $\forall v_i \in V_1 < \forall v_j \in V_2$). Then only those vertices in $G_1$ with no fusing edge incident to can have at most one edge
incident to any vertex in $G_2$.
\end{lemma}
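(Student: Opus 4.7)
The approach is to reduce the fused-module setting back to the basic-module setting already handled in Lemma~\ref{linpart}. First I would decompose each fused module into its two cross-separable basic constituents: $G_1 = M_1 \cup M_1'$ joined by a set of fusing edges $F_1$, and likewise $G_2 = M_2 \cup M_2'$ joined by $F_2$. Because $G_1$ and $G_2$ are (partially) linearly separable with the cores of $G_1$ sitting lower in the order than those of $G_2$, each basic module of $G_1$ is itself (partially) linearly separable from each basic module of $G_2$, so Lemma~\ref{linpart} is applicable to every such pair.

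Next I would rule out edges to $G_2$ from any vertex of $G_1$ that already carries a fusing edge. A fusing edge, by Lemma~\ref{fusion} and Lemma~\ref{corefusion}, forces its endpoint $x \in G_1$ to be the lowest-ordered vertex in one partition of its basic module, paired with a similarly extreme vertex in the opposite cross-separable component. If $x$ additionally had an edge to some vertex $z \in G_2$, then the triangle formed by $x$, its fusing partner, and a core neighbour of $x$ would force the segment from $x$ to $z$ to exceed unit length, exactly in the manner of Corollary~\ref{forbidcor}. Concretely I would argue that this reproduces one of the forbidden sub-configurations of Lemma~\ref{ul}, so no such $z$ exists.

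Finally, for a vertex $x \in G_1$ with no incident fusing edge, $x$ belongs to exactly one basic module of $G_1$, say $M_1$. Applying Lemma~\ref{linpart} to the pairs $(M_1, M_2)$ and $(M_1, M_2')$ gives at most one edge from $x$ into each of $M_2$ and $M_2'$ separately. To collapse this to \emph{one} edge from $x$ into all of $G_2$, I would argue that two edges, one into $M_2$ and one into $M_2'$, combined with the fusion structure of $G_2$ (in particular the forward path threading the cores of the two basic constituents across their fusing edges), would make $x$ a low-order vertex attached to two vertices spanned by a common forward path of $G_2$; this is exactly the pattern ruled out by Corollary~\ref{forbidcor}.

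The main obstacle will be the geometric case analysis in the two middle steps: one must track which end of its basic module a fused vertex sits at, and then argue with the same quadrilateral-angle comparisons used in the proof of Lemma~\ref{fusion} that the relevant segment becomes strictly longer than unity. Once those distance comparisons are in place, the combinatorial reduction to Lemma~\ref{linpart} is straightforward.
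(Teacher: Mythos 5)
The paper itself gives no proof of this lemma: it is stated bare, introduced only by the sentence that it ``extends'' Lemma~\ref{linpart}, so there is nothing concrete to compare your argument against. Your reconstruction --- decompose each fused module into its cross-separable basic constituents, apply Lemma~\ref{linpart} to each pair of constituents, and then separately (i) exclude edges into $G_2$ from vertices of $G_1$ that already carry a fusing edge and (ii) collapse ``at most one edge into each constituent of $G_2$'' down to ``at most one edge into $G_2$'' --- is a sensible and probably the intended route, and it correctly identifies the two places where Lemma~\ref{linpart} alone does not suffice.

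That said, two steps are asserted rather than proved. First, your claim that a fusing edge forces its endpoint to be the lowest-ordered vertex of one partition of its basic module does not follow from Lemma~\ref{fusion} or Lemma~\ref{corefusion}; those results only say that the fusing edges between two cross-separable modules form a matching (with the core-vertex exception), not that the matched vertices are extremal in the order. The exclusion in step (i) therefore still requires an explicit quadrilateral/distance argument in the style of the proof of Lemma~\ref{fusion} and Corollary~\ref{distcol}, and it is not obvious that it goes through uniformly for every position the fused vertex can occupy inside its module. Second, you treat a fused module as the union of exactly two basic modules, whereas the paper's Lemma~\ref{fuseset} allows chains of $k$ cross-separable modules fused successively; your pairwise use of Lemma~\ref{linpart} then yields ``at most one edge into each of $k$ constituents,'' and the collapsing argument of step (ii) must either be iterated along the chain or replaced by a single forward-path argument threading the entire core sequence of $G_2$. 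Neither gap is clearly fatal, but both are precisely the geometric case analysis you flag as ``the main obstacle,'' and as written the proposal defers rather than closes them.
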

\begin{lemma}\label{fuseset}
Let a set of cross separable modules $\{G_1 =(U_1,V_1,E_1),\\ G_2=(U_2,V_2,E_2), \ldots, G_k=(U_k,V_k,E_k)\}$ are fused successively
such that for any $u_i \in U_i , u_j \in~U_j,\\ v_i \in V_i$ and $v_j \in V_j,
u_i > u_j$ and $v_i < v_j$ for $i < j$. The number of fusing edges in such an arrangement is linear.
\end{lemma}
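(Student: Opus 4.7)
The plan is to bound the total number of fusing edges by a telescoping argument that applies Lemma~\ref{fusion} inductively along the chain $G_1, G_2, \ldots, G_k$. First I fix notation: the cross-separable ordering forces every fusing edge to go from some $V_i$ to some $U_j$ with $i < j$. Let $n_i = |U_i| + |V_i|$, so the goal is to show the total number of fusing edges is $O(\sum_i n_i)$.

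The plan is to induct on $k$. The base case $k = 2$ is exactly Lemma~\ref{fusion}, which gives a one-to-one matching between $V_1$ and $U_2$, hence at most $\min(|V_1|, |U_2|) \le n_1 + n_2$ fusing edges. For the inductive step, I would isolate the fusing edges incident to the rightmost module $G_k$, namely edges from $V_1 \cup \cdots \cup V_{k-1}$ into $U_k$. The immediate fusing edges (those from $V_{k-1}$) form a matching by Lemma~\ref{fusion} applied to the pair $G_{k-1}, G_k$, contributing $O(|U_k| + |V_{k-1}|)$. The remaining ``skipping'' edges, joining $V_i$ (with $i < k-1$) to $U_k$, are what I would control using the path restricted property.

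For the skipping edges, I plan to fix $u \in U_k$ and suppose it receives fusing edges from $v_1 \in V_{i_1}$ and $v_2 \in V_{i_2}$ with $i_1 < i_2 < k$. Taking a core neighbor $u^\ast \in U_{i_2}$ of $v_2$ (which exists since $v_2$ belongs to some module), the sequence $v_1 \to u \to v_2 \to u^\ast$ is a forward path whose range contains $u$ as an interior $U$-vertex; I would then use Corollary~\ref{forbidcor} together with the auxiliary-versus-core case analysis from the proof of Lemma~\ref{corefusion} to rule out this configuration unless $u^\ast$ coincides with a specific core vertex. The upshot is that each $u \in U_k$ participates in at most $O(1)$ fusing edges, giving $O(|U_k|)$ fusing edges incident to $G_k$. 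Combined with the inductive hypothesis on $G_1, \ldots, G_{k-1}$, the total is $O(\sum_i n_i) = O(n)$.

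The main obstacle will be the skipping-edge step: a priori the overall fusing graph is only known to be a PRBG, which allows $\Theta(n \log n)$ edges, so the linear bound must genuinely exploit the module structure and the stronger $G_{UDG}$ forbidden patterns (Corollary~\ref{forbidcor}) rather than just the path restricted property. I anticipate that the cleanest execution will treat separately the cases where $u$ is a core vertex of $G_k$ and where it is auxiliary, and will repeatedly invoke Lemma~\ref{corefusion} to rule out two external $V$-vertices from distinct earlier modules sharing a common neighbor in $U_k$.
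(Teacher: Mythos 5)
Your skeleton (induct on $k$, settle $k=2$ with Lemma~\ref{fusion}, then bound the edges entering the newest module) matches the paper's incremental structure, but the step you yourself flag as the main obstacle --- the ``skipping'' edges from $V_{i}$ with $i<k-1$ into $U_k$ --- is exactly where the proposal has a genuine gap, and the tool you reach for is not the one that closes it. Your forward path $v_1 \to u \to v_2 \to u^\ast$ passes \emph{through} the disputed vertex $u$, so the path restricted property and Corollary~\ref{forbidcor} give you nothing: those forbidden patterns require a forward path connecting $v_1$ and $v_2$ that is disjoint from the two edges you are trying to exclude. Such a path exists only when the intermediate modules $G_{i_1},\ldots,G_{i_2},\ldots$ are actually chained together by fusing edges. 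The paper's exclusion in that chained case is metric, not purely combinatorial: it iterates Corollary~\ref{distcol} along the chain to obtain $\overline{uv_{i_2}} > \overline{uv_{i_1}}$, and since every edge of a $G_{UDG}$ has unit length, at most one of the two candidate edges can exist. Your outline never invokes the unit-distance inequality, which is the actual engine of the argument.

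Second, the conclusion you draw --- that each $u \in U_k$ meets $O(1)$ fusing edges --- is stronger than what the paper establishes and is not justified by your sketch. When two earlier modules $G_i$ and $G_j$ are \emph{not} connected by a chain of fusing edges, nothing in Lemma~\ref{fusion} or Lemma~\ref{corefusion} prevents both from sending fusing edges toward a common later module; the paper treats this unchained case separately, observing via Lemma~\ref{pl} that the resulting edge set forms a forest and is therefore linear \emph{globally}, even though individual vertices or modules may receive several such edges. Your induction, which charges everything entering $G_k$ to $O(|U_k|+|V_{k-1}|)$, has no mechanism for this case and would need to be restructured around a global forest bound rather than a per-vertex constant.
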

\begin{figure}[ht]
\begin{center} 
 \includegraphics [scale=0.45]{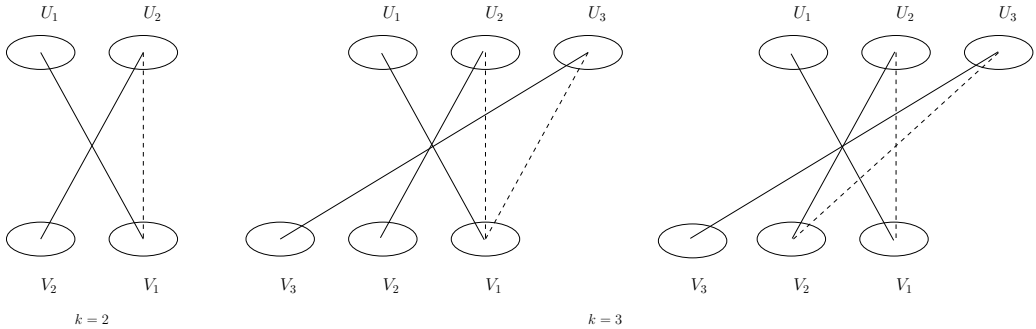}
 \caption{\small Fusing edges across multiple three separable modules}
 \label{fusedmods}
\end{center}
\end{figure}
\proof
Let us give an incremental proof for the stated lemma. Let us begin with $k = 2$. Note that if two or more core vertices of a module
have an edge incident to a vertex in another module, these edges are not counted as the fusing edges and a module has a linear
number of edges with such edges accounted by Corollary~\ref{plt}. There can be only one-to-one adjacencies between $V_1$ and $U_2$
by Lemma~\ref{fusion}.\\ 
If $k = 3$ and another module $G_3$ is fused, then the vertices of $G_2$ and $G_3$ can also have only one-to-one
adjacencies by Lemma~\ref{fusion}. We show that if $v_1 \in V_1$ has an edge incident to $u_3 \in U_3$, then either $v_1$ has no fusing edge incident
to any vertex in $U_2$ (in $G_2$) or $u_3$ has no fusing edge incident to any vertex in $V_2$ (in $G_2$). It can be easily argued by
Lemma~\ref{corefusion}. Edge $(u_2,v_1)$ can also be realized as an {\it auxiliary edge} of module $G_2$. Thus, if the edge $(u_2,v_1)$ exists, then $v_1$ and any vertex
$v_2 \in V_2$ cannot have edges incident to $u_3 \in U_3$. Thus, the stated lemma holds true for $k = 3$. Note that $\overline{v_2u_3} > \overline{v_1u_3}$ 
by Corollary~\ref{distcol}. Note that the fusing edges can be counted by charging at most one edge to each vertex.\\
Let us now consider $k = 4$. Since the lemma holds true for $k = 3$, the fused module formed by $G_1 , G_2$ and $G_3$ denoted as $G_{1,2,3}$ has a
linear number of fusing edges with the pattern discussed above. Similarly, $G_{2,3,4}$ has a linear number of edges. Now consider the adjacencies between $V_1$
(in $G_1$) and $U_4$ (in $G_4$). Refer to Figure~\ref{fuse4} for pictorial representation. If $v_1 \in V_1$ and $u_4 \in U_4$ have no fusing edges incident to $G_2$ or $G_3$,
then the edge $(u_4,v_1)$ is feasible.
Let us consider the situation when $v_1 \in V_1$ and $u_4 \in U_4$ have adjacencies in $G_2$ and $G_3$. If $v_1 \in V_1$ has an
edge incident to a vertex in $U_2$ and some $v_2 \in V_2$ has an edge incident to $U_3$ and some $v_3 \in V_3$ has an edge incident to $u_4 \in U_4$.
Note that $\overline{u_4v_3} > \overline{u_4v_2}$ by Corollary~\ref{distcol}. Also $\overline{u_4v_2} > \overline{u_4v_1}$. Thus, in this 
condition $v_1 \in V_1$ and some $v_3 \in V_3$ cannot have edges incident to $u_4 \in U_4$. 
Note that there can be other cases where $v_1 \in V_1$ has a path to $v_4 \in V_4$ by fusing edges through either $V_2$ or $V_3$.
In that case, again $v_1$ cannot have an edge incident to $v_4$. However, a vertex in $V_2$ or $V_3$ (not having adjacencies with $V_1$) 
can have an edge incident to $V_4$.\\
\begin{figure}[ht]
\begin{center} 
 \includegraphics [scale=0.6]{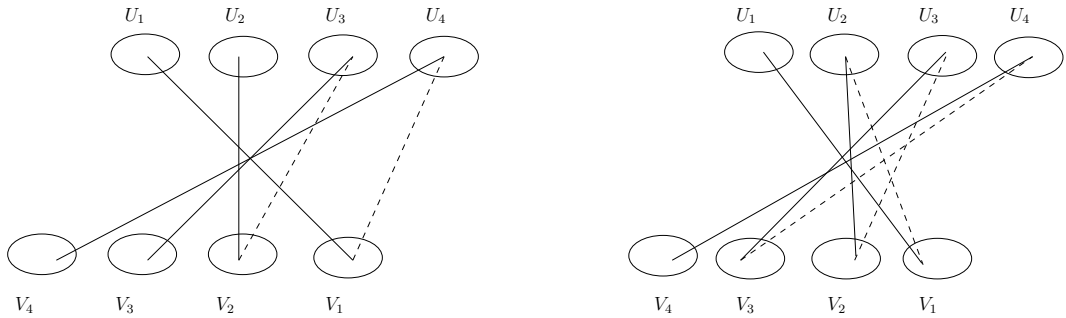}
 \caption{\small Fusing edges across four cross separable modules}
 \label{fuse4}
\end{center}
\end{figure}
Note that when every module has edge(s) incident to the successive module, then the
same argument can be extended further for the higher values of $k$. Thus, the stated lemma holds true in this case.
Let us consider the case when this condition does not hold. Recall when $k=3$, if $G_2$ and $G_3$ have no {\it fusing edges} incident across,
then both these modules can have {\it fusing edges} incident to $G_1$. Similarly if $G_1$ and $G_2$ have no {\it fusing edges} incident across,
then both these modules can have {\it fusing edges} incident to $G_3$. Let us analyse the general case where $U_i$ (in $G_i$) and $U_j$ (in $G_j$) ($j > i$)
do not have a path by {\it fusing edges} through $G_{i+1},\ldots,G_j$. Thus, $U_i$ and $U_j$ can have edges incident to a common module $G_k$ ($k > j$).
Note that by Lemma~\ref{pl} such edges can only form a tree with a linear number of edges. Note that $U_i$ or $U_j$ can also have edges
incident to more than one modules. Observe that in this tree except for the edges to leaves in the other partition, the number of edges 
is linear (in terms of $|U_i| + |U_j|$). Let us now focus on the edges to the leaves in other partiton. If a vertex in $U_i \cup U_j$ has
edges incident to more than one vertices, then note that these vertices are in different modules by Lemma~\ref{fusion}. By a similar argument on
the other partition, these edges are part of another tree. Note that this argument is applicable for an arrangement of more than two modules as well.
Thus, if a module has no edges incident to the successive module, still the total number of fusing edges remains linear. It proves the stated lemma.
\qed

A $G \in G_{UDG}$ can be partitioned either into a set of modules such that any two modules are (partially) linearly separated or cross separated.
Such an alignment of the modules is called a {\it module sequence}.
Thus, a bottom up hierarchy of the modules can be created where the modules are grouped and fused successively leading to a graph in $G_{UDG}$.
Now we argue that the number of fusing edges in a module sequence is linear since every
vertex has at most one fusing edge incident to it (refer to Lemma~\ref{linpart} and Lemma~\ref{fuseset}).
Recall that the number of {\it auxiliary edges}
in a module is linear by Corollary~\ref{plt}. A module has a linear number of edges by corollary~\ref{linmod}. Thus, a graph in the class $G_{UDG}$ has a linear number of edges.
A $UDG$ on convex point sets $G$ can be partitioned into two graphs $G_1 \in G_{UDG}$ and $G_2 \in G_{UDG}$. Thus, we conclude that a unit distance
graph on a convex point set has a linear number of edges.
\begin{theorem}\label{udglin}
A $UDG$ on a convex point set with $n$ vertices has $O(n)$ edges.
\end{theorem}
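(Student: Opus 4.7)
The plan is to reduce the claim to bounding the edges of a single graph in the class $G_{UDG}$, and then to exploit the module decomposition developed in this section. By Section~\ref{sc2}, a $UDG$ on a convex point set splits into two graphs $G'_1, G'_2 \in G_{UDG}$ after deleting at most $3n$ edges (the $2n$ edges of Lemma~\ref{ap} not crossing the antipodal line, plus one trimmed edge per vertex of $V$). Hence it suffices to prove $|E(G)|=O(|U|+|V|)$ for any $G=(U,V,E)\in G_{UDG}$, since the theorem then follows by adding the $O(n)$ discarded edges to twice the bound on the pieces.

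First I would produce a \emph{module sequence} for $G$. Starting from a suitable extremal left-tree $T_l(v_0)$ together with all auxiliary vertices that attach to it, I extract a first module in the sense defined above; by Corollary~\ref{linmod} its edge count is linear in the number of vertices it spans. Repeating this on the remainder, and using Lemma~\ref{graphpart} to partition any module that overlaps an extracted one along a separator line, I produce a family of modules $\{G_i=(U_i,V_i,E_i)\}$ such that the vertex sets partition $U\cup V$ and every pair $G_i,G_j$ is either (partially) linearly separable or cross separable. The internal edges of the modules contribute $\sum_i O(|U_i|+|V_i|)=O(|U|+|V|)$, so only the inter-module (fusing) edges remain.

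Next I would bound the fusing edges. For any two modules that are (partially) linearly separable, Lemma~\ref{linpart} restricts the connection to a single edge, charged to the unique core vertex supporting it. For the cross separable case, Lemma~\ref{fusion} forbids any vertex from carrying more than one fusing edge (unless a genuine auxiliary-edge pattern is created, which is already absorbed into some module and hence handled by Corollary~\ref{plt}), and Lemma~\ref{fuseset} extends this one-edge-per-vertex bound across an entire chain of cross separable modules. Walking through the module sequence and charging each fusing edge to one endpoint therefore yields at most $|U|+|V|$ fusing edges in total.

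Summing the internal contribution $O(|U|+|V|)$ and the fusing contribution $O(|U|+|V|)$ gives $|E(G)|=O(n)$ for $G\in G_{UDG}$, and combining with the reduction from the beginning proves the theorem. The main obstacle I foresee is ensuring that the module extraction terminates with disjoint vertex sets and that every edge of $G$ is accounted for as either internal to some module or as a fusing edge between two modules in the resulting sequence — in particular, handling configurations where a would-be module overlaps several previously chosen ones requires repeated application of Lemma~\ref{graphpart} and a careful argument that the process does not cascade into more than a linear total count of sub-modules.
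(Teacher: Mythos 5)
Your proposal follows essentially the same route as the paper: reduce to two graphs in $G_{UDG}$ via the antipodal decomposition of Section~\ref{sc2}, partition each into a module sequence, bound the internal edges of each module by Corollary~\ref{linmod} and Corollary~\ref{plt}, and bound the fusing edges via Lemmas~\ref{linpart}, \ref{fusion} and \ref{fuseset}. The obstacle you flag at the end --- that the module extraction terminates with disjoint modules and that every edge is accounted for as internal or fusing --- is likewise left informal in the paper's own argument, so your reconstruction matches it in both substance and level of rigor.
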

\section{Concluding Remarks}In this note, we defined a family of bipartite graphs known as the path restricted ordered bipartite graphs. We also showed that these graphs can be obtained from various geometric graphs
on convex point sets. We studied various structural properties of these graphs and showed that a path restricted ordered bipartite graph on $n$ vertices has $O(n \log n)$ edges and this
bound it tight. The same upper bound was already known for the unit distance graphs and the locally Gabriel graphs on convex point sets. However, the best known lower bound known
to the edge complexity on these graphs for convex point sets is $\Omega(n)$. We improved the upper bound for unit distance graphs to $O(n)$.
The problem of bridging the gap in the bounds remains an open for the locally Gabriel graphs on a convexly independent point set.\\

\bibliographystyle{amsplain}
\bibliography{references}

\providecommand{\bysame}{\leavevmode\hbox to3em{\hrulefill}\thinspace}
\providecommand{\MR}{\relax\ifhmode\unskip\space\fi MR }
\providecommand{\MRhref}[2]{%
  \href{http://www.ams.org/mathscinet-getitem?mr=#1}{#2}
}
\providecommand{\href}[2]{#2}
\begin{thebibliography}{10}

\bibitem{csc}
Bernardo~M. {\'A}brego and Silvia Fern{\'a}ndez-Merchant, \emph{The unit
  distance problem for centrally symmetric convex polygons}, Discrete {\&}
  Computational Geometry \textbf{28} (2002), no.~4, 467--473.

\bibitem{beck}
J{\'o}zsef Beck and Joel Spencer, \emph{Unit distances}, J. Comb. Theory, Ser.
  A \textbf{37} (1984), no.~3, 231--238.

\bibitem{sas}
Peter Brass, Gyula Károlyi, and Pavel Valtr, \emph{A {T}ur{\'a}n-type extremal
  theory of convex geometric graphs}, Discrete and Computational Geometry,
  Algorithms and Combinatorics, vol.~25, Springer Berlin Heidelberg, 2003,
  pp.~275--300 (English).

\bibitem{bp}
Peter Bra{\ss} and J{\'a}nos Pach, \emph{The maximum number of times the same
  distance can occur among the vertices of a convex n-gon is {O}(n log n)}, J.
  Comb. Theory, Ser. A \textbf{94} (2001), no.~1, 178--179.

\bibitem{Capoy}
Vasilis Capoyleas and J{\'a}nos Pach, \emph{A {T}ur{\'a}n-type theorem on
  chords of a convex polygon}, Journal of Combinatorial Theory, Series B
  \textbf{56} (1992), no.~1, 9--15.

\bibitem{han}
Herbert Edelsbrunner and Péter Hajnal, \emph{A lower bound on the number of
  unit distances between the vertices of a convex polygon.}, J. Comb. Theory,
  Ser. A \textbf{56} (1991), no.~2, 312--316.

\bibitem{Erd86}
P.~Erd\H{o}s, \emph{On some metric and combinatorial geometric problems},
  Discrete Mathematics \textbf{60} (1986), no.~0, 147--153.

\bibitem{Erdudg}
Paul Erd\H{o}s, \emph{On sets of distances of n points}, The American
  Mathematical Monthly \textbf{53} (1946), no.~5, pp. 248--250.

\bibitem{Fb92}
Peter~C. Fishburn and James~A. Reeds, \emph{Unit distances between vertices of
  a convex polygon}, Comput. Geom. \textbf{2} (1992), 81--91.

\bibitem{fox}
Jacob Fox, J{\'a}nos Pach, and CsabaD T{\'o}th, \emph{Tur{\'a}n-type results
  for partial orders and intersection graphs of convex sets}, Israel Journal of
  Mathematics \textbf{178} (2010), no.~1, 29--50 (English).

\bibitem{furedi}
Zolt{\'a}n F{\"u}redi, \emph{The maximum number of unit distances in a convex
  {\it n}-gon}, J. Comb. Theory, Ser. A \textbf{55} (1990), no.~2, 316--320.

\bibitem{H92}
Zolt{\'a}n F{\"u}redi and P{\'e}ter Hajnal, \emph{Davenport-schinzel theory of
  matrices}, Discrete Mathematics \textbf{103} (1992), no.~3, 233--251.

\bibitem{ggg}
Ruben~K. Gabriel and Robert~R. Sokal, \emph{A new statistical approach to
  geographic variation analysis}, Systematic Zoology \textbf{18} (1969), no.~3,
  259--278.

\bibitem{gclgg}
Sathish Govindarajan and Abhijeet Khopkar, \emph{On locally gabriel geometric
  graphs}, Graphs and Combinatorics \textbf{31} (2015), no.~5, 1437--1452.

\bibitem{szem}
S.~J{\'o}zsa and E.~Szemer{\'e}di, \emph{The number of unit distances on the
  plane}, Infinite and finite sets, Coll. Math. Soc. J. Bolyai \textbf{10}
  (1973), 939--950.

\bibitem{yyy}
Sanjiv Kapoor and Xiang-Yang Li, \emph{Proximity structures for geometric
  graphs}, International Journal of Computational Geometry and Applications,
  vol.~20, 2010, pp.~415--429.

\bibitem{Keszegh09}
Bal{\'a}zs Keszegh, \emph{On linear forbidden submatrices}, J. Comb. Theory,
  Ser. A \textbf{116} (2009), no.~1, 232--241.

\bibitem{perles}
Y.S. Kupitz and M.A. Perles, \emph{Extremal theory for convex matchings in
  convex geometric graphs}, Discrete {\&} Computational Geometry \textbf{15}
  (1996), no.~2, 195--220 (English).

\bibitem{ram}
Gy. Károlyi, J.~Pach, G.~Tóth, and P.~Valtr, \emph{Ramsey-type results for
  geometric graphs, ii}, Discrete {\&} Computational Geometry \textbf{20}
  (1998), no.~3, 375--388 (English).

\bibitem{crss}
J.~Pach, F.~Shahrokhi, and M.~Szegedy, \emph{Applications of the crossing
  number}, Algorithmica \textbf{16} (1996), no.~1, 111--117 (English).

\bibitem{pach}
J\'{a}nos Pach and G\'{a}bor Tardos, \emph{Forbidden patterns and unit
  distances}, Proceedings of the twenty-first annual symposium on Computational
  geometry (New York, NY, USA), SCG '05, ACM, 2005, pp.~1--9.

\bibitem{pet10}
Seth Pettie, \emph{On nonlinear forbidden 0-1 matrices: A refutation of a
  f{\"u}redi-hajnal conjecture}, SODA, 2010, pp.~875--885.

\bibitem{ps04}
Rom Pinchasi and Shakhar Smorodinsky, \emph{On locally {D}elaunay geometric
  graphs}, Proceedings of the twentieth annual symposium on Computational
  geometry (New York, NY, USA), ACM, 2004, pp.~378--382.

\bibitem{bb}
Joel Spencer, Endre Szemer{\'e}di, and William~T. Trotter, \emph{Unit distances
  in the euclidean plane}, pp.~293--308, Academic Press, 1984.

\bibitem{suk}
Andrew Suk, \emph{Turan and ramsey type problems on geometric objects}, Ph.D.
  thesis, New York, NY, USA, 2011, AAI3445328.

\bibitem{Szk}
L{\'a}szl{\'o}~A. Sz{\'e}kely, \emph{Crossing numbers and hard {E}rd{\H o}s
  problems in discrete geometry}, Comb. Probab. Comput. \textbf{6} (1997),
  no.~3, 353--358.

\bibitem{tds}
G{\'a}bor Tardos, \emph{On 0-1 matrices and small excluded submatrices}, J.
  Comb. Theory, Ser. A \textbf{111} (2005), no.~2, 266--288.

\bibitem{tth}
G.~T{\'o}th and P.~Valtr, \emph{Geometric graphs with few disjoint edges},
  Discrete {\&} Computational Geometry \textbf{22} (1999), no.~4, 633--642
  (English).

\bibitem{toth}
G{\'e}za T{\'o}th, \emph{Note on geometric graphs}, Journal of Combinatorial
  Theory, Series A \textbf{89} (2000), no.~1, 126--132.

\bibitem{turan}
Paul Tur{\'a}n, \emph{On an extremal problem in graph theory}, Matematikai
  {\'e}s Fizikai Lapok \textbf{48} (1941), 436--452.

\bibitem{pv}
Pavel Valtr, \emph{Graph drawing with no k pairwise crossing edges}, Graph
  Drawing (Giuseppe DiBattista, ed.), Lecture Notes in Computer Science, vol.
  1353, Springer Berlin Heidelberg, 1997, pp.~205--218.

\end{thebibliography}
\end{document}